\documentclass[11pt]{article}

\usepackage{graphicx} 
\usepackage{amssymb,amsmath,amsthm,csquotes,mathtools,amssymb}
\usepackage{bbm}
\usepackage[backend=biber,style=authoryear-comp,citestyle=authoryear-comp,maxnames=1]{biblatex}
\usepackage{mathtools}
\usepackage{csquotes}
\usepackage{amsthm}
\usepackage{amsmath}
\usepackage{mathabx}
\usepackage{dsfont}
\usepackage{enumerate}
\usepackage{biblatex}
\usepackage{hyperref}
\usepackage{physics}
\usepackage{multirow,array}
\usepackage{xcolor}
\newtheorem{claim}{Claim}
\usepackage{subcaption}
\usepackage[margin=1in]{geometry}
\usepackage{tikz}
\usetikzlibrary{positioning, arrows.meta}

\newtheorem{definition}{Definition}
\newtheorem{theorem}{Theorem}

\newtheorem*{lemma}{Lemma}

\title{Quantum Primitive for Output-Hiding Function Sharing}

\author{Olivia R. Hartzell\thanks{I thank Terry Rudolph, David Kagan, and Jacob A. Barandes for helpful comments. \\
This work is the subject of U.S. Provisional Patent Application no. 64/045,601.\\
\copyright 2026 Olivia Hartzell. This work is licensed under a Creative Commons Attribution-NonCommercial-NoDerivatives 4.0 International License (CC BY-NC-ND 4.0) available at https://creativecommons.org/licenses/by-nc-nd/4.0/. Any commercial reuse or creation of derivative works without the express written consent of the author is prohibited.}}

\date{\today}

\begin{document}
\maketitle

\begin{abstract}
A quantum information–theoretic primitive is introduced for determining a discrete-valued function that depends on multiple parties’ local private inputs.

The primitive permits the parties to mutually learn each others' local inputs, and thereby determine function values, while their individual systems remain independent of these inputs. The resulting function values are shared among the parties, but may remain information-theoretically hidden from any external observer, as well as from adversarial state-preparation or measurement processes within the quantum system, in every iteration. In particular, while classically producing a shared function with these information-theoretic properties requires the use of private keys or hidden randomness, in the proposed setting it is achieved using quantum resources alone.

I outline the primitive's general properties while applications across a broad range of secure quantum communication and computation settings including; quantum key distribution, multi-party coordination and decision schemes, function evaluation, and in some settings, protocols for fairly generated private coins, are relegated to further publications.
\end{abstract}

\section{Introduction and Overview}
Let there be two space-like separated parties, Alice and Bob, indexed by
\[
i \in \{a, b\}
\]

Each party $i$ may hold local private information denoted by $x_i$, drawn from a domain $\mathcal{X}_i$, which may consist of real numbers, complex amplitudes, symbolic parameters, or any other form of information relevant to the implementation. For example, $x_i$ may denote the outcome of a local random coin flip, or information about party $i$'s current location. 

We define the joint domain as
\[
\mathcal{X} =  \mathcal{X}_a \times \mathcal{X}_b .
\]

The parties wish to compute the value of a function
\[ 
s : \mathcal{X} \rightarrow \mathcal{A},
\]
where the range $\mathcal{A}$ is a finite discrete alphabet,
\[
\mathcal{A} = \{ a_0, a_1, \ldots, a_{q-1} \},
\]
whose elements may represent numerical values, logical outcomes, symbols, or other distinguishable states. For convenience, we will focus on 
 \(\mathcal{A}=\mathbb{Z}_q\ = \{0,1,\ldots,q-1\}\) denoting a finite alphabet with cardinality $q$, but the construction is not limited to numeric outputs.

Thus, the function
\[
s(x_a, x_b)
\]

maps the parties' collective local inputs to one of a finite set of discrete outcomes.

For example, in a QKD implementation, parties may use this primitive over a number of iterations to generate a shared string of values, or a key, \(S = (s_1,s_2,\dots,s_m)\) where each  $s \in \{0, 1\}\ $ for all $s \in S$. $s$ may also represent the answer to a discrete-valued question such as; $s = \mathbbm{1}\{x_a > x_b\}$. 

\hfill

Parties are mutually trusted in that they are assumed to not collude with any attacker or external party. While they trust their own local operations, they may not trust resources outside of their own local systems.

\subsection{Primitive Properties}
The primitive allows the parties to achieve the following: 
\begin{enumerate}
    \item Generate a single measurement event that depends on the parties' joint information $(x_a, x_b)$, while the parties' local systems are independent of their individual inputs. Given this measurement event, parties' recover each others' local inputs $(x_a, x_b)$ with certainty in an ideal execution environment. 

    \item Given this measurement event, parties can record corresponding function values $s(x_a, x_b)$, such that the public measurement event is independent of $s(x_a, x_b)$; and therefore its value remains information-theoretically hidden from those external to the protocol.

    \item When the parties' both compute the value $s(x_a, x_b)$ correctly, the parties' recorded values remain information-theoretically hidden from adversaries \textit{within} the quantum system (including adversarial state-preparation or measurement attacks), in \textit{every iteration} of the primitive.  
\end{enumerate}

The parties therefore generate a shared value $s(x_a, x_b)$ which remains information-theoretically hidden both from those external to the protocol, and from those within the quantum system itself. In particular, this is achieved with quantum resources alone, \textit{without} any additional private keys, hidden randomness, or classical communication -- a capability that is provably classically impossible. 
In this sense, the primitive permits the parties to reach an agreement $s(x_a, x_b)$, while any physical record left behind is fully independent of the agreement itself. 

\subsection{Related Work}
The information-sharing process described above in $1.$ is similar in spirit to quantum superdense coding (\cite{bennett1992communication}) but where \textit{multiple parties} hold private information, as in \cite{yang2020bidirectional, pan2024evolution}. In this particular setting, the primitive permits both parties to mutually learn each others' local information with certainty, with a single measurement event, in contrast to \cite{nguyen2004quantum} in which parties' share information sequentially. Additionally, the process outlined that follows is more general than that described in \cite{yang2020bidirectional}, which restricts parties' local operations to the generalized Pauli operators. In this setting, the information-sharing and resulting security properties arise from the interaction between the selected local unitary operators and the chosen joint measurement basis. As a result, the underlying mechanism is not tied to the specific algebraic structure of the generalized Pauli operators and applies to a broader class of encoding schemes.

In addition, the parties may use their shared local information to generate shared knowledge, or function value $s(x_a, x_b)$ with the information-theoretic properties previously described. In particular, this capability is not a classical \textit{speed-up} for computing $s(x_a, x_b)$. Importantly, it bears a fundamentally different information-sharing property from secure multi-party computation (MPC). In the proposed primitive, each party \textit{learns} the others' local information, while their output $s(x_a, x_b)$ remains information-theoretically hidden. This structure contrasts sharply with classical and quantum secure function evaluation protocols (\cite{Yao1982, GoldreichMicaliWigderson1987, CrepeauGottesmanSmith2002}), including quantum secure multi-party summation protocols (\cite{Shi2016}, \cite{Lu2024}, \cite{wu2023quantum}, \cite{yi2021quantum}), in which the parties wish to compute a joint function value, while \textit{not} revealing their inputs to one another. In particular, in these settings, the resulting function value may be public. 

In the proposed setting, parties may compute $s(x_a, x_b)$ which depends on \textit{both} of their inputs, while $s(x_a, x_b)$ remains information-theoretically hidden from others, \textit{without} the use of private keys or hidden randomness. Notably, this structure differs from other measurement device independent quantum dialogue settings which require QKD as a \textit{pre-requisite} for security (\cite{maitra2017measurement}).

A more detailed discussion of QKD implementations and how the proposed settings differs from standard settings (in particular the security properties of $s(x_a, x_b)$), are relegated to an accompanying write-up along with other applications for shared decisions $s(x_a, x_b)$. 

A second additional write-up discusses adaptations when parties' may have strategic preferences over $s(x_a, x_b)$, and therefore may not be mutually-trusted. In these settings, under certain assumptions, the primitive may be used to generate a private unbiased coin.

\section{Procedure}\label{sec:procedure}
First, the parties agree on the objects that define the primitive: an initial quantum state, a finite set of local unitary operators, a measurement basis, and a recording strategy. This agreement may be public. Once these choices are mutually agreed upon, the parties may separate and need not communicate during the execution of the primitive.

The parties share access to an initial joint quantum state
\[
    \ket{\psi} \in \mathcal{H} = \bigotimes_{i=a}^{b} \mathcal{H}_i,
\]
where each  \( \mathcal{H}_i \) is a finite-dimensional Hilbert space.  
While the parties choose this initial state \( \ket{\psi} \) for their particular goals, the state may not have been prepared by the parties themselves -- they may be given access to this initial state by an untrusted party, who will be referred to as Charley. Each party \( i \) has access only to their local subsystem, i.e., to the degrees of freedom 
corresponding to \( \mathcal{H}_i \).

Additionally, each party \( i \) has access to a finite set of local unitaries,
\[
\mathcal{U}_i = \{\, U_i^{(k)} \,\}_{k=1}^{m_i} \subset \mathrm{U}(\mathcal{H}_i).
\]
where \(\mathrm{U}(\mathcal{H}_i)\) denotes the set of all unitary operators acting on \(\mathcal{H}_i\).

The set of all possible joint unitaries implementable by the parties is given by
\[
\mathcal{U}_{a, b} =
\left\{
\, \bigotimes_{i=a}^{b} U_i^{(k_i)}
\;\middle|\;
U_i^{(k_i)} \in \mathcal{U}_i
\right\}.
\]

For each iteration of the primitive, party \(i\) realizes a local value
\[
x_i \in \mathcal{X}_i^{\mathrm{sub}} \subseteq \mathcal{X}_i,
\]
$\mathcal{X}_i^{\mathrm{sub}}$ is finite and discrete and denotes the subset of admissible values for a given iteration of the primitive.  
The ex-ante distribution over the subset
\[
\mathcal{X}^{\mathrm{sub}} = \mathcal{X}_a^{\mathrm{sub}} \times \mathcal{X}_b^{\mathrm{sub}}
\]
is assumed to be uniform. For example, these local values $x_i$ may be the results of local uniform-random coin flips. For convenience we will denote $X_i$ as the corresponding random variable uniformly distributed over $\mathcal{X}_i^{\mathrm{sub}} $.  

Given their realizations \( x_i \), each party selects a corresponding unitary operator \( U_i^{(x_i)} \in \mathcal{U}_i \).  
The mapping \( x_i \mapsto U_i^{(x_i)} \) is assumed to be bijective—that is, each admissible value of \( x_i \) uniquely determines a feasible unitary, and each unitary in \( \mathcal{U}_i \) corresponds to exactly one value of \( x_i \).\footnote{When parties may wish to share only \textit{subsets} of information, the mapping need not be bijective over the full domain. Instead, it suffices that it is bijective when restricted to a subset $\{x_i^1, x_i^2, \dots\}$, provided that the remaining properties of the primitive—particularly those governing the distribution of $s$ conditional on the shared inputs—continue to hold.}

After these unitaries 
are applied, the resulting global state becomes
\[
    \ket{\phi} = \left( \bigotimes_{i=a}^{b} U_i \right) \ket{\psi}.
\]

The state \( \ket{\phi} \) is then sent to a referee, or measurement station, who will be referred to as Eve, 
for measurement in an orthonormal basis, $\mathcal{J}$ of \(\bigotimes_{i=a}^{b} \mathcal{H}_i\) where \(\{\ket{j}\}_{j\in \mathcal{J}}\) will denote the corresponding projective measurement outcomes. 

Eve announces the measurement outcome to the parties. For simplicity, we will use \(j \in \mathcal{J}\) to denote both the measurement outcome label and the corresponding announced result, 
with \(\ket{j}\) referring explicitly to the basis state associated with outcome \(j\).

Importantly, Eve need not be trusted, and she may collude with the state-preparer, Charley. A detailed discussion on adversarial attacks and the security bounds under these attacks are relegated to section \ref{sec:adversarial_success_bounds}.

Given their chosen local unitary $U_i$, (or equivalently $x_i$) and the announcement $j$, each party $i$ records a value 
\[
s_i \in \mathcal{A} \cup \{\emptyset\}
\]

according to a pre-agreed recording strategy (for example, a coding rule in the QKD implementation).  This recording strategy must follow the properties defined in detail in Section \ref{sec:properties:recording-s}.

A schematic of the primitive procedure is depicted in Figure \ref{fig:quantum_protocol_ab_charley}.

\begin{figure}[ht]
\centering
\resizebox{\linewidth}{!}{%
\begin{tikzpicture}[
    node distance=2cm and 3cm,
    box/.style={
        draw,
        rounded corners,
        align=center,
        minimum height=1.5cm,
        minimum width=5cm,
        font=\large
    },
    arrow/.style={-Stealth, thick, line width=1pt}
]

\node[box] (agreement) {Alice and Bob agree on:\\
Initial state $\ket{\psi}$,\\
Sets of unitaries $\{U_A\}, \{U_B\}$,\\
Measurement basis $\mathcal{J}$,\\
Recording strategy $s$ (depends on $U_A$, $U_B$, $j$)
};

\node[box, below=2cm of agreement] (Charley) {Charley prepares $\ket{\psi}$};

\node[box, below left=of Charley, xshift=-2cm] (Alice) 
{Alice\\Locally generate $x_A$, choose $U_A$ via bijection \\apply $U_A$ to subsystem};

\node[box, below right=of Charley, xshift=2cm] (Bob) 
{Bob\\Locally generate $x_B$, choose $U_B$ via bijection \\apply $U_B$ to subsystem};

\node[box, below=4cm of Charley] (Eve) 
{Eve measures $(U_A \otimes U_B) \ket{\psi}$ in basis $\mathcal{J}$\\Announces measurement $j$};

\node[box, below=of Eve] (Record) 
{Alice and Bob record $\tilde{s}$ \\ using $U_i$, $j$, and pre-agreed strategy};

\draw[arrow] (agreement.south) -- (Charley.north);
\draw[arrow] (Charley.south) -- (Alice.north);
\draw[arrow] (Charley.south) -- (Bob.north);

\draw[arrow] (Alice.south) -- (Eve.north);
\draw[arrow] (Bob.south) -- (Eve.north);

\draw[arrow, dashed] (Eve.south) -- (Record.north);

\end{tikzpicture}%
}
\caption{Schematic of the quantum primitive for two parties (Alice and Bob).}
\label{fig:quantum_protocol_ab_charley}
\end{figure}

\section{Example Primitives}\label{sec:example_primitives}
I next provide examples of primitives that satisfy the information-revelation and security properties previously described. Various subsets of these primitives may be useful under various levels of trust in the initial state preparation and measurement. In other words, players may adapt the security guarantees by adjusting the subsets of feasible unitaries that they randomize over, or the recording strategies.

\subsection{Primary Example}\label{example-protocol-joint}

Consider the operator
\[
{\mathcal{J}} = \frac{1}{\sqrt{2}}
\begin{bmatrix}
1 & 0 & 0 & -\mathrm{i} \\
0 & 1 & -\mathrm{i} & 0 \\
0 & -\mathrm{i} & 1 & 0 \\
-\mathrm{i} & 0 & 0 & 1
\end{bmatrix}
\]
which corresponds to a basis \(\{\ket{j}\}_{j\in \mathcal{J}}\), whose elements will be denoted by 

\[
\begin{aligned}
\ket{\phi_{00}} &= \frac{1}{\sqrt{2}}\left(\ket{00} - \mathrm{i}\ket{11}\right) \\
\ket{\phi_{01}} &= \frac{1}{\sqrt{2}}\left(\ket{01} - \mathrm{i}\ket{10}\right) \\
\ket{\phi_{10}} &= \frac{1}{\sqrt{2}}\left(\ket{10} - \mathrm{i}\ket{01}\right) \\
\ket{\phi_{11}} &= \frac{1}{\sqrt{2}}\left(\ket{11} - \mathrm{i}\ket{00}\right)
\end{aligned}
\]

The initial state will be given by 

\[
 \ket{\psi} = \mathcal{J} \ket{00} = \frac{1}{\sqrt{2}} \ket{00} - \mathrm{i} \ket{11}
\]

For each party, consider the following sets of unitary operators $\{U_a, U_a', U_a'', U_a'''\}, \{U_b, U_b', U_b'', U_b'''\}$:\footnote{Any such unitary can be decomposed as
\[
U = Z(\xi_1)\,R_y(\pi/2)\,Z(\xi_2),
\]
where
\[
Z(\xi) = \mathrm{diag}(e^{-i\xi/2}, e^{i\xi/2})
\]
denotes a virtual $Z$ rotation and
\[
R_y(\pi/2) =
\frac{1}{\sqrt{2}}
\begin{bmatrix}
1 & -1\\
1 & 1
\end{bmatrix}
\]
is a physical resonant carrier pulse about the $Y$ axis.

It is worth noting that in settings where local unitaries differ only by $Z$ rotations, implemented either virtually or through passive waveplates, these physical setups may be less susceptible to side-channel attacks than traditional photonic-based communication protocols.}

\[
U_a = \frac{1}{\sqrt{2}}
\begin{bmatrix}
1 & e^{i \pi/4} \\
- e^{-i \pi/4} & 1
\end{bmatrix}, \quad
U_b = \frac{1}{\sqrt{2}}
\begin{bmatrix}
1 & e^{i \pi/4} \\
- e^{-i \pi/4} & 1
\end{bmatrix}
\]

\[
U_a' = \frac{1}{\sqrt{2}}
\begin{bmatrix}
e^{i \pi/2} & e^{i 3\pi/4} \\
- e^{-i 3\pi/4} & e^{-i \pi/2}
\end{bmatrix}, \quad
U_b' = \frac{1}{\sqrt{2}}
\begin{bmatrix}
e^{i \pi/2} & e^{i 3\pi/4} \\
- e^{-i 3\pi/4} & e^{-i \pi/2}
\end{bmatrix}
\]

\[
U_a'' = \frac{1}{\sqrt{2}}
\begin{bmatrix}
e^{i 5\pi/4} & e^{i \pi/2} \\
- e^{-i \pi/2} & e^{-i 5\pi/4}
\end{bmatrix}, \quad
U_b'' = \frac{1}{\sqrt{2}}
\begin{bmatrix}
e^{i 3\pi/4} & 1 \\
-1 & e^{-i 3\pi/4}
\end{bmatrix}
\]

\[
U_a''' = \frac{1}{\sqrt{2}}
\begin{bmatrix}
e^{i 3\pi/4} & 1 \\
-1 & e^{-i 3\pi/4}
\end{bmatrix}, \quad
U_b''' = \frac{1}{\sqrt{2}}
\begin{bmatrix}
e^{i \pi/4} & e^{i 3\pi/2} \\
- e^{-i 3\pi/2} & e^{-i \pi/4}
\end{bmatrix}
\]

After Charley prepares the state $\mathcal{J}\ket{00}$ and distributes the subsystems to the parties, each party applies a local unitary from their respective sets to their subsystem, after which the joint state $\ket{\phi} = (U_A \otimes U_B) \ket{\psi}$ is sent to Eve for measurement.

\begin{figure}[ht]
    \centering
    \[
    \begin{array}{c|cccc}
       & U_b & U_b' & U_b'' & U_b''' \\ \hline
    U_a & \ket{\phi_{00}} & \ket{\phi_{11}} & \ket{\phi_{10}} & \ket{\phi_{01}}\\
    U_a' & \ket{\phi_{11}} & \ket{\phi_{00}} & \ket{\phi_{01}} & \ket{\phi_{10}} \\
    U_a'' & \ket{\phi_{10}} & \ket{\phi_{01}} & \ket{\phi_{00}} & \ket{\phi_{11}} \\
    U_a''' & \ket{\phi_{01}} & \ket{\phi_{10}} & \ket{\phi_{11}} & \ket{\phi_{00}} \\
    \end{array}
    \]
    \caption{Measurement outcomes given corresponding joint unitaries, where rows correspond to Alice's unitaries and columns correspond to Bob's. Each measurement outcome denoted in an entry occurs with probability one under corresponding joint unitaries.}
    \label{fig:game_matrix}
\end{figure}

To perform a measurement in the $\mathcal{J}$ basis, Eve applies the Hermitian conjugate of the basis-defining unitary,
\[
\mathcal{J}^{\dagger} = \frac{1}{\sqrt{2}}
\begin{bmatrix}
1 & 0 & 0 & i \\
0 & 1 & i & 0 \\
0 & i & 1 & 0 \\
i & 0 & 0 & 1
\end{bmatrix}.
\]

This unitary maps each $\mathcal{J}$-basis state back to the computational basis:
\[
\begin{aligned}
\mathcal{J}^{\dagger} \ket{\phi_{00}} &= \ket{00}, &
\mathcal{J}^{\dagger} \ket{\phi_{01}} &= \ket{01}, \\
\mathcal{J}^{\dagger} \ket{\phi_{10}} &= \ket{10}, &
\mathcal{J}^{\dagger} \ket{\phi_{11}} &= \ket{11}.
\end{aligned}
\]

Finally, Eve measures both qubits in the computational basis $\{\ket{00}, \ket{01}, \ket{10}, \ket{11}\}$, and announces the measurement outcome.

The set of feasible joint unitaries yield the distribution over measurement outcomes, shown in Figure \ref{fig:game_matrix}, in an ideal setting with no measurement error. Note that given the knowledge of one's own unitary and correct measurement outcome, each party can determine the other party's chosen unitary with probability one. In this example, as in superdense coding, each party encodes two classical bits through their choice of local unitary. 

\subsection{Mutual Certainty with Locally Information-Free Encodings: Classical Impossibility}
Given the measurement outcome $j$ together with their local inputs $x_i$, each party can infer the others' chosen unitary, and therefore local inputs, with certainty, i.e. $H(X_b | j, x_a) = H(X_a | j, x_b) = 0$.

Note that the parties' local reduced densities $\rho_A, \rho_B$ carry zero Holevo information about $X_a$ and $X_b$ respectively (conditional on all other information in the primitive, including the shared initial state $\ket{\psi}$). This is due to the fact that since $(U_a(x_a) \otimes U_b(x_b))|\psi\rangle\langle\psi|(U_a^\dagger \otimes U_b^\dagger)$ is a pure state on $\mathcal{H}_A \otimes \mathcal{H}_B$, and $|\psi\rangle$ is maximally entangled, the reduced state on Alice's subsystem is 
$$\rho_A(x_a) = \mathrm{Tr}_B[(U_a(x_a) \otimes U_b(x_b))|\psi\rangle\langle\psi|(U_a^\dagger \otimes U_b^\dagger)] = \frac{I}{d}$$
for all $x_a$, and symmetrically $\rho_B(x_b) = \frac{I}{d}$ for all $x_b$. 
Therefore,
\begin{equation}
    I(X_a; \rho_A) = 0, \qquad I(X_b; \rho_B) = 0
\end{equation}
(conditional on the shared initial state $\ket{\psi}$), while the joint density $\rho_{AB}(x_a, x_b)$ permits a measurement event $j$ such that $H(X_b | j, x_a) = H(X_a | j, x_b) = 0$.

\begin{claim}[No classical locally-private, globally-recoverable encoding]\label{proof:classical_impossibility_j} \leavevmode \\
Let $X_a \perp X_b$ with $H(X_a), H(X_b) > 0$, and let $R \perp (X_a, X_b)$ be classical shared randomness.

Consider a classical protocol that produces messages $M_i$ for each party $i$, where $M_i = f_i(X_i, R)$, are functions of party $i$'s local inputs and shared randomness $R$, and a variable $J = g(M_a, M_b)$, which is a function of the parties' messages.

Suppose the protocol is \textit{locally information-free}, such that:
\[
I(X_i; M_i \mid R) = 0 \quad \forall i,
\]
while being \textit{globally recoverable} for the participants, such that for all realizations $x_i$ and $j$:
\[
H(X_{-i} \mid X_i = x_i, J = j) = 0.
\]
No such classical protocol exists.
\end{claim}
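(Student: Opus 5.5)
The plan is to show that, in the classical setting of the claim, local information-freeness forces every message $M_i$ to be a function of the shared randomness $R$ alone. Then $J$ carries no information about $(X_a,X_b)$, and global recoverability fails. The argument has three steps.

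\emph{Step 1 (messages are determined by $R$).} Since $M_i=f_i(X_i,R)$ is a deterministic function of $(X_i,R)$, we have $H(M_i\mid X_i,R)=0$. Hence
\[
I(X_i;M_i\mid R)=H(M_i\mid R)-H(M_i\mid X_i,R)=H(M_i\mid R).
\]
The hypothesis $I(X_i;M_i\mid R)=0$ therefore gives $H(M_i\mid R)=0$. So for every $r$ with $\Pr(R=r)>0$, the map $x_i\mapsto f_i(x_i,r)$ is constant on the support of $X_i$. Put differently, $M_i=\tilde f_i(R)$ almost surely. Consequently $J=g(M_a,M_b)=\tilde g(R)$ almost surely.

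\emph{Step 2 ($J$ is independent of the inputs).} Since $R\perp(X_a,X_b)$ and $J$ is a function of $R$, we get $J\perp(X_a,X_b)$. Using this together with $X_a\perp X_b$, for every realization $x_a$ and every $j$ with positive probability,
\[
H(X_b\mid X_a=x_a,J=j)=H(X_b\mid X_a=x_a)=H(X_b)>0.
\]
The symmetric statement holds with the roles of $a$ and $b$ exchanged.

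\emph{Step 3 (contradiction).} Global recoverability requires $H(X_{-i}\mid X_i=x_i,J=j)=0$ for all realizations. Taking $i=a$, this contradicts Step 2. So no such classical protocol exists.

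The main point needing care is Step 1, namely the passage from a vanishing conditional mutual information to pointwise constancy of $f_i(\cdot,r)$. This step is also exactly where the model matters. The argument uses that $M_i$ is a deterministic function of $(X_i,R)$, with no private local randomness. With private local randomness, one-time-pad-style encodings could be locally information-free, but recovery would then require hidden keys, which the claim excludes.

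I would handle Step 1 by working conditionally on each $r$ in the support of $R$. For each such $r$, the quantity $I(X_i;f_i(X_i,r))$ equals $H(f_i(X_i,r))$, and this must vanish. I would also restrict all statements about realizations to those of positive probability, so that the conditional entropies in Steps 2 and 3 are well defined.
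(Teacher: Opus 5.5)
Your proof is correct and has the same skeleton as the paper's. Both derive $J\perp(X_a,X_b)$ from local information-freeness together with $R\perp(X_a,X_b)$, and then note that recoverability fails.

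The difference lies in how that independence is obtained. You use determinism of $f_i$ to turn $I(X_i;M_i\mid R)=0$ into $H(M_i\mid R)=0$, so each $M_i$, and hence $J$, is a function of $R$ alone. The paper instead argues at the level of distributions. It substitutes $P(m_i\mid x_i,r)=P(m_i\mid r)$ into $P(j\mid x_a,x_b)=\sum_r P(r)\sum_{m_a,m_b}P(j\mid m_a,m_b)P(m_a\mid x_a,r)P(m_b\mid x_b,r)$, which then visibly does not depend on $(x_a,x_b)$.

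Your route is more elementary and yields a sharper structural fact: the messages are constant given $R$. The paper's factorization, however, never uses determinism. It therefore also covers randomized local encodings with independent private coins, which is exactly where your Step 1 breaks down. So determinism is needed for your shortcut, not for the claim itself.

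Conversely, you spell out two things the paper leaves implicit. You derive the final step $H(X_b\mid X_a=x_a,J=j)=H(X_b)>0$ explicitly via $X_a\perp X_b$. You also restrict to positive-probability realizations.
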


In particular, the quantum setting enables purely relational encoding: information about multiple independent inputs exists only in a single joint correlation event, and is absent from all locally accessible subsystems, eliminating the need for a classical shared key that must be stored, protected, and reused.

\subsection{Determining $s(x_a, x_b)$}
Recall that following Eve's measurement announcement $j$, each party records a value $s_i$, which depends on both their local information and Eve's announcement.
An example of a recording strategy that satisfies the properties described in section \ref{sec:properties:recording-s} is depicted in Figure \ref{fig:xa+2xb mod 4}. 
Note that under each joint unitary, the parties obtain identical values $s_a = s_b$. For convenience, we denote this shared value under the true joint unitaries as $\tilde{s} \in \mathcal{A} \cup \emptyset$.

    \begin{figure}[ht]
    \centering
    \[
    \begin{array}{c|cccc}
       & U_b & U_b' & U_b'' & U_b''' \\ \hline
     U_a & (\ket{\phi_{00}}, {3}) & (\ket{\phi_{11}}, {1}) & (\ket{\phi_{10}}, {3}) & (\ket{\phi_{01}}, {1})\\
    U_a' & (\ket{\phi_{11}}, {0}) & (\ket{\phi_{00}}, {2}) & (\ket{\phi_{01}}, {0}) & (\ket{\phi_{10}}, {2}) \\
    U_a'' & (\ket{\phi_{10}}, 1) & (\ket{\phi_{01}}, 3) & (\ket{\phi_{00}}, 1) & (\ket{\phi_{11}}, 3) \\
    U_a''' & (\ket{\phi_{01}}, 2) & (\ket{\phi_{10}}, 0) & (\ket{\phi_{11}}, 2) & (\ket{\phi_{00}}, 0) \\
    \end{array}
    \]
    \caption{Each matrix element corresponds with pairs $(j, s_i)$, indicating parties' recording strategies $s_i$ under their corresponding input $U_i$, and the measurement outcome $j$. Note that under each joint unitary, the parties obtain identical values $s_a = s_b = \tilde{s}$.}
    \label{fig:xa+2xb mod 4}
\end{figure}

In particular, for any such valid recording strategy $j \perp \tilde{s}$, and therefore, any external party who may learn $j$, cannot determine $\tilde{s}$ better than uniform random.

Let $\hat{s} = (\hat{s}_a, \hat{s}_b)$ denote an adversary’s estimate of the parties' recorded values $s = (s_a, s_b)$.

Given information $\tilde{Z}$, the adversary forms a posterior $P(U^{(k)} | \tilde{Z})$ over the possible joint unitaries $U^{(k)} \in \mathcal{U}_{a,b}$. The probability of a successful estimate, conditioned on the information $\tilde{Z}$ and the announcement $j$, is:
\begin{equation}
    P(\hat{s} = s | \tilde{Z}, j) = \sum_{k=1}^{|\mathcal{U}_{a, b}|} P(\hat{s} = s | U^{(k)}, j) P(U^{(k)} | \tilde{Z})
\end{equation}

where:
\begin{itemize}
    \item $P(U^{(k)} \mid \tilde{Z})$ is the adversary's posterior distribution over the feasible joint unitary $U^{(k)}$ given the observed information $\tilde{Z}$. 
    
    \item $P(\hat{s} = s \mid U^{(k)}, j)$ is a deterministic value $\{0, 1\}$ defined by the recording strategies. Specifically, for a fixed $U^{(k)}$ and $j$, the parties' recording strategy uniquely determines the pair $(s_a, s_b)$. The adversary's estimate $\hat{s}$ is the value $(s_a, s_b)$ under the unitary $U^{(k)}$ and the announcement $j$ given by the recording strategies.
\end{itemize}

\begin{claim}\label{claim:prob_bound_adversary}
    For any adversary who obtains information $\tilde{Z}$,
    if $s_a = s_b = \tilde{s}, \tilde{s} \neq \emptyset$, then $P(\hat{s} = s | \tilde{Z}, j) = P(U | \tilde{Z})$ for the joint unitary $U$ chosen by the parties.
\end{claim}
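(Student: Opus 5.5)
Starting from the decomposition in the preceding display, the plan is to show that, with $j$ fixed, the indicator $P(\hat{s} = s \mid U^{(k)}, j)$ is $1$ for $U^{(k)} = U$, the parties' true joint unitary, and $0$ for every other $U^{(k)} \in \mathcal{U}_{a,b}$. The sum then collapses to the single term $P(U \mid \tilde{Z})$. I split this into the \emph{true} term and the \emph{cross} terms.

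\textbf{True term.} If the adversary's hypothesis is $U^{(k)} = U$, then the estimate $\hat{s}$ is by definition the pair $(s_a, s_b)$ that the recording strategy assigns to $(U, j)$. That is exactly what the parties record, so the indicator equals $1$. This step uses only that the recording strategy is deterministic given $(U_i, j)$, which the bullet list following the display states.

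\textbf{Cross terms.} Take $U^{(k)} = U_a^{(k_a)} \otimes U_b^{(k_b)} \neq U$, and suppose it predicts the same pair $(s_a, s_b) = (\tilde{s}, \tilde{s})$ with $\tilde{s} \neq \emptyset$. Then I must derive a contradiction from the properties of valid recording strategies in Section~\ref{sec:properties:recording-s}. The relevant property is the one exhibited by Figure~\ref{fig:xa+2xb mod 4}: conditioned on the announcement $j$, distinct feasible joint unitaries consistent with $j$ yield distinct shared values. This is the same property that makes $j \perp \tilde{s}$, with each value of $\tilde{s}$ appearing exactly once among the unitaries compatible with a given $j$. Unitaries not compatible with $j$ (probability zero of producing $j$) either assign $\emptyset$ or are excluded outright, since an outcome of probability zero cannot have been announced in an honest ideal run. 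In the table one checks, for instance, that for $j = \ket{\phi_{00}}$ the four compatible pairs give $3, 2, 1, 0$, all distinct. So any $U^{(k)} \neq U$ gives a pair different from $(\tilde{s}, \tilde{s})$, and its indicator is $0$.

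The hypothesis $s_a = s_b = \tilde{s} \neq \emptyset$ is what rules out the degenerate case. Without it, a wrong hypothesis could coincide with the true one on both coordinates, for example by both assigning $\emptyset$, and the sum would exceed $P(U \mid \tilde{Z})$.

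\textbf{Main obstacle.} The hard step is making the injectivity in the cross terms rigorous from the abstract recording-strategy properties rather than from the single example. I would first try to state it as a lemma: for fixed $j$, the map $U^{(k)} \mapsto (s_a, s_b)$ is injective on the set of $U^{(k)}$ with non-$\emptyset$ recordings. I would justify this from the requirement that the pair $(x_i, j)$ determines $x_{-i}$, namely $H(X_{-i} \mid j, x_i) = 0$, combined with the requirement that the values be distributed over $\mathcal{A}$ without repetition given $j$. If Section~\ref{sec:properties:recording-s} instead only requires $j \perp \tilde{s}$ with uniform $\tilde{s}$, I would argue by counting. The unitaries compatible with $j$ number $|\mathcal{A}|$ when $|\mathcal{U}_{a,b}| = |\mathcal{A}| \cdot |\mathcal{J}|$, as in the example, and uniformity of $\tilde{s}$ given $j$ then forces a bijection.
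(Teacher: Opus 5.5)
Your overall plan of showing that the indicator $P(\hat{s}=s \mid U^{(k)}, j)$ equals $1$ at the true $U$ and $0$ elsewhere, so that the sum collapses, is the paper's. Your treatment of hypotheses with $j^*(U^{(k)}) = j$, via injectivity of $f$ on non-null entries, matches its second step. That injectivity is stated outright as the first indistinguishability property in Section~\ref{sec:properties:recording-s}, so no counting argument is needed; your counting fallback would also fail for strategies with $\emptyset$ entries such as Figure~\ref{fig: arb_s_strategy}, where $16 \neq 3\cdot 4$. The gap is in the hypotheses with $j^*(U^{(k)}) \neq j$. You dismiss these as assigning $\emptyset$ or being ``excluded outright'' because $j$ could not have been produced, but neither is right. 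The weights in the decomposition are $P(U^{(k)} \mid \tilde{Z})$, which are not conditioned on $j$ and can be positive for such hypotheses, e.g.\ when $\tilde{Z}$ comes from a measurement on $\rho_{AB}$ alone. So these terms cannot be dropped. Nor are their predicted records null in general, because each party's rule decodes $j$ using its own unitary. In Figure~\ref{fig:xa+2xb mod 4}, the hypothesis $U_a \otimes U_b$ with $j = \ket{\phi_{11}}$ has Alice decode $U_b'$ and record $1$, and Bob decode $U_a'$ and record $0$, so $\hat{s} = (1,0)$.

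The missing idea is this cross-decoding. Under $U^{(k)} = U_a^{(k_a)} \otimes U_b^{(k_b)}$, Alice's predicted record is the value attached to $(U_a^{(k_a)}, U_b^{\#})$ and Bob's is the value attached to $(U_a^{\#}, U_b^{(k_b)})$. By recoverability, $U_b^{\#}$ and $U_a^{\#}$ are the unique partners giving most-likely outcome $j$. These two joint unitaries are distinct, since otherwise $j^*(U^{(k)}) = j$. By the first indistinguishability property, either $\hat{s}_a \neq \hat{s}_b$ or one of them is $\emptyset$, and in both cases $\hat{s} \neq (\tilde{s},\tilde{s})$.

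The same argument is needed a second time, and this is where the hypothesis $s_a = s_b = \tilde{s} \neq \emptyset$ really enters. Your comparison for compatible hypotheses presupposes that $\tilde{s}$ is the value attached to a true $U$ with $j^*(U) = j$, which you obtain by assuming an honest ideal run. The claim, however, is for any adversary, and Eve may announce $j \neq j^*(U)$. The paper's first step applies the cross-decoding argument to the parties' actual records. If $j \neq j^*(U)$, then $s_a \neq s_b$ or one of them is null, so the hypothesis forces $j = j^*(U)$. Only then does injectivity of $f$ at fixed $j$ exclude every compatible $U^{(k)} \neq U$. Besides ruling out coincidences on $\emptyset$, the hypothesis therefore certifies that the announcement is consistent with the true unitary. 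With these two uses of recoverability together with the first indistinguishability property, your collapse of the sum to $P(U \mid \tilde{Z})$ goes through.
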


Therefore, an adversarial Charley or Eve's probability of correctly estimating the parties' recorded values, $P(\hat{s} = s | \tilde{Z}, j) = P(U | \tilde{Z})$, the probability of determining the true joint unitary given any information $\tilde{Z}$ obtained from the protocol. This is the problem of quantum state discrimination.

\subsection{Adversarial Success Bounds}\label{sec:adversarial_success_bounds}
Suppose that \textit{prior} to the application of the local unitaries, Charley appends an ancilla system in $\mathcal{H}_E$ with dimension $d_E$. The resulting joint initial state is:
\[
|\psi\rangle_{ABE} \in \mathcal{H}_a \otimes \mathcal{H}_b \otimes \mathcal{H}_E,
\]
where $|\psi\rangle_{AB}$ and $|\psi\rangle_E$ denote the reduced states on the $AB$ system and the ancilla system, respectively. 

The parties then apply their respective local unitaries, yielding the state:
\[
|\phi\rangle_{ABE} = \left( U_a \otimes U_b \otimes I_E \right) |\psi\rangle_{ABE}.
\]

Following the local operations, we define the post-unitary density operators as: \\
\[
\rho_{ABE} = |\phi\rangle\langle\phi|_{ABE}, \quad \rho_{AB} = \text{Tr}_E[\rho_{ABE}], \quad \rho_{E} = \text{Tr}_{AB}[\rho_{ABE}]
\]

\begin{definition}
     The \textit{Protocol Execution } $\mathcal{T}$ represents the vector of all elements produced by the protocol, excluding parties' local private inputs $\{x_i\}$ and local randomness $\{r_i\}$.

\begin{itemize}
    \item Classical Case: $\mathcal{T}_C = (T, R)$, where $R$ is the initial shared randomness and $T = \{m_1, m_2, \dots, m_l\}$ is the the transcript, or the set of all messages exchanged. Each message $m_k$ is a function $f_k(x_k, R, \mathcal{M}_{<k})$ of a party's local input $x_k$, the shared randomness $R$, and the set of all preceding messages $\mathcal{M}_{<k}$.
    \item Quantum Case: $\mathcal{T}_Q = (\rho_{ABE}, \ket{\psi}_{ABE}, j)$, where $\rho_{ABE}$ is the post-unitary density matrix, $\ket{\psi}_{ABE}$ is the initial shared state, and $j$ is the public announcement.
\end{itemize}

\end{definition}

\begin{claim}\label{claim:adversary_bounds_entangled}
For any $\mathcal{T}_Q = (\rho_{ABE}, \ket{\psi}_{ABE}, j)$, let $\tilde{Z}$ denote the information obtained from any quantum operation or measurements performed on the post-unitary system $\rho_{ABE}$, where the operation may conditioned on knowledge of the initial state $\ket{\psi}_{ABE}$ and any classical communication between adversarial parties.

In the described primitive, if the initial reduced state $\ket{\psi}_{AB}$ is maximally entangled: \\
$P(\hat{s} = s \mid \tilde{Z}, j) \leq \frac{1}{4} = \frac{1}{|\mathcal{A}|}$, given the extractable information $\tilde{Z}$ and the public announcement $j \in \mathcal{T}_Q$.

\end{claim}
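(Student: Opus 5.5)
The plan rests on monogamy of entanglement: because the $AB$ marginal of the initial state is pure, Charley's ancilla is decoupled, and all of Eve's information must come from the announced outcome. I will show that, for each announced outcome, the four joint unitaries consistent with it carry four distinct recorded values, so the adversary's guess can succeed with probability at most $1/4$.

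\textbf{Step 1: the ancilla is decoupled.} If $\rho_{AB}=\text{Tr}_E|\psi\rangle\langle\psi|_{ABE}$ is the pure maximally entangled state $|\psi\rangle_{AB}$, then $|\psi\rangle_{ABE}=|\psi\rangle_{AB}\otimes|\chi\rangle_E$ for some fixed $|\chi\rangle_E$ chosen by Charley. Since the parties act only on $AB$, we get
\[
\rho_{ABE}=(U_a\otimes U_b)|\psi\rangle\langle\psi|(U_a\otimes U_b)^\dagger\otimes|\chi\rangle\langle\chi|,
\]
and $\rho_E=|\chi\rangle\langle\chi|$ does not depend on $(x_a,x_b)$. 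Any operation Charley performs on $E$, even conditioned on knowing $|\psi\rangle_{ABE}$, therefore yields data $\tilde Z_E$ with $I(\tilde Z_E;X_a,X_b)=0$.

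\textbf{Step 2: Eve's view is exactly $j$.} The $AB$ system reaching Eve is, by Figure~\ref{fig:game_matrix}, the basis state $|\phi_j\rangle$. Whatever Eve measures, her outcome distribution depends on $(U_a,U_b)$ only through $j$. I would argue this by noting that for each fixed $j$, all four joint unitaries compatible with $j$ produce the same physical state up to a global phase. I would verify this directly from the matrices: for instance $(U_a\otimes U_b)|\psi\rangle$ and $(U_a'\otimes U_b')|\psi\rangle$ should both equal $|\phi_{00}\rangle$ up to a phase.

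\textbf{Step 3: the residual posterior.} Combining Steps 1 and 2, $P(U^{(k)}\mid\tilde Z,j)=P(U^{(k)}\mid j)$. Since the prior over $\mathcal U_{a,b}$ is uniform and each $j$ is produced by exactly four joint unitaries (one per row and column of Figure~\ref{fig:game_matrix}), this posterior is uniform, equal to $1/4$, on those four unitaries.

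\textbf{Step 4: apply Claim~\ref{claim:prob_bound_adversary}.} By that claim, $P(\hat s=s\mid\tilde Z,j)$ equals the posterior mass that $\tilde Z$ places on the true joint unitary, which by Step 3 is at most $1/4$. More carefully, the adversary's best strategy is to output the value $\tilde s$ with the largest posterior mass. The recording strategy of Figure~\ref{fig:xa+2xb mod 4} assigns the four unitaries compatible with each $j$ four distinct values; for example, $j=\phi_{00}$ gets $3,2,1,0$. Hence every value has posterior $1/4$, giving the bound $1/|\mathcal A|$.

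\textbf{Main obstacle.} The hard part is making Step 2 rigorous. Eve holds the actual quantum system and is not forced to perform the $\mathcal J$ measurement. The bound therefore depends on the four compatible joint unitaries producing literally identical states (up to global phase) rather than merely orthogonal-to-the-rest states. I would first try to establish this by direct computation using the decomposition $U=Z(\xi_1)R_y(\pi/2)Z(\xi_2)$. If only relative phases appeared, states within the same $j$-class would differ and Eve could distinguish them, and the argument would have to be revised.
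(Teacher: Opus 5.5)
Your proof is correct. It shares the paper's first step: a pure, maximally entangled $AB$ marginal forces $\ket{\psi}_{ABE}=\ket{\psi}_{AB}\otimes\ket{\chi}_E$. After that it takes a different route.

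\textbf{The paper's route.} It treats the problem as discriminating $M=16$ pure states in dimension $d=4$. It invokes PGM optimality under uniform success probabilities (Eldar et al.) and bounds the average success by $d/M$. It then adds a separate lemma showing that even post-selected, trace-non-preserving operations cannot raise the posterior above $d/M=\tfrac14$, because the ensemble averages to $I/4$.

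\textbf{Your route.} You use the degeneracy of this example. The sixteen post-unitary states are only four rays, each reached by exactly four joint unitaries. So $\rho_{ABE}=\ket{\phi_{j^*(U)}}\bra{\phi_{j^*(U)}}\otimes\ket{\chi}\bra{\chi}$ depends on $U$ only through $j^*(U)$. Hence any data satisfies $\tilde Z\perp U\mid j^*(U)$, and $P(U\mid\tilde Z)=\tfrac14\,P(j^*(U)\mid\tilde Z)\le\tfrac14$.

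\textbf{What each buys.} Your one line handles joint $ABE$ operations, adaptivity and post-selection without any PGM machinery. It also shows the bound is attained by simply measuring in $\mathcal J$. The cost is generality. Your argument needs exact coincidence of the states within a class, and needs the recording rule to assign distinct values within each class. The paper's dimension-counting bound needs only $M$ pure states averaging to $I/d$, so it survives when the states in a class are distinct. That is the kind of bound the paper reuses in its higher-dimensional section.

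\textbf{Your ``main obstacle.''} It is not one. Figure~\ref{fig:game_matrix} asserts that outcome $j$ occurs with probability one. For unit vectors, $|\langle\phi_j|\phi\rangle|=1$ forces $\ket{\phi}=e^{i\theta}\ket{\phi_j}$ (equality in Cauchy--Schwarz), so no relative-phase discrepancy can arise within a class. The table does check out. Writing $\ket{\psi}=(I\otimes D)\ket{\Phi^+}$ with $D=\mathrm{diag}(1,-i)$, the post-unitary state is $(U_aDU_b^{T}\otimes I)\ket{\Phi^+}$. For the first entry, $U_aDU_b^{T}=D$, which gives exactly $\ket{\phi_{00}}$; the remaining entries work out the same way.
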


\begin{claim}\label{claim:adversary_bounds_locc}
For any $\mathcal{T}_Q = (\rho_{ABE}, \ket{\psi}_{ABE}, j)$, let $\tilde{Z}$ denote the information obtained via any \textit{Local Operations and Classical Communication} (LOCC) protocol performed on the post-unitary systems $\rho_{AB}$ and $\rho_E$. The choice of local measurements and the subsequent classical exchange may be conditioned on knowledge of the initial state $\ket{\psi}_{ABE}$.

In the described primitive, under any feasible initial reduced state $\ket{\psi}_{AB}$ (which need not be entangled):\\
$P(\hat{s} = s \mid \tilde{Z}, j) \leq \frac{1}{4} = \frac{1}{|\mathcal{A}|}$,  given the extractable information $\tilde{Z}$ and the public announcement $j \in \mathcal{T}_Q$. 
\end{claim}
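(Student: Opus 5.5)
By Claim~\ref{claim:prob_bound_adversary}, whenever $s_a=s_b=\tilde s\neq\emptyset$ the adversary's success probability equals $P(U\mid\tilde Z,j)$, the posterior weight on the true joint unitary. So the plan is to bound the optimal probability of identifying the joint unitary from $(\tilde Z,j)$ when $\tilde Z$ comes from an LOCC (hence separable) measurement across the cut $A\,|\,B\,|\,E$. I will give both the ideal-execution reduction and a general argument with measurement order made explicit.

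The first step is a reduction to a four-hypothesis problem. From Figures~\ref{fig:game_matrix} and~\ref{fig:xa+2xb mod 4}, for every announcement $j$ exactly four joint unitaries $(U_a^{(k)},U_b^{(\pi_j(k))})$, $k=1,\dots,4$, are compatible with $j$, and they carry four distinct values of $\tilde s$. The prior over $\mathcal{X}^{\mathrm{sub}}$ is uniform, so these four candidates have equal prior weight. It therefore suffices to show that no separable POVM $\{E^z_A\otimes E^z_B\otimes E^z_E\}$ applied to $\rho_{ABE}$ does better than $1/4$ at identifying $k$ once $j$ is fixed. In the ideal honest execution the outcome $j$ is deterministic given the joint unitary. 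Hence $P(\tilde s\mid j)$ is exactly uniform, and the whole question is how much $\tilde Z$ adds.

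The second step uses the algebraic structure of the unitary sets. I will check, by direct multiplication of the listed $2\times2$ matrices, that for each $j$ the four compatible pairs have the form $(V_k\otimes W_k)(U_a\otimes U_b^{(\pi_j(1))})$. Here the $V_k,W_k$ are single-qubit operators that differ only by $Z$-phases and Pauli-type swaps, with $V_k\otimes W_k$ fixing $\ket{\phi_{j}}$ up to phase. Consider a product POVM element $E_A\otimes E_B\otimes E_E$. Its likelihood is $\mathrm{Tr}[(U_a^{\dagger}E_AU_a\otimes U_b^{\dagger}E_BU_b\otimes E_E)\ket{\psi}\!\bra{\psi}_{ABE}]$, so conditioning on a value of $\tilde Z$ amounts to conditioning $\ket{\psi}_{ABE}$ on a product effect. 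I will then show that the map $k\mapsto(V_k,W_k)$ forms a group acting transitively on the four candidates. Each local marginal alone is then uninformative about $k$: knowing $x_a$ alone, or $x_b$ alone, determines neither the pairing nor $\tilde s$, because $\tilde s$ depends jointly on the row and on the parity class of the column. Averaging the likelihoods over this group should give $\sum_k P(\tilde Z=z,j\mid k)\ge 4\max_k P(\tilde Z=z,j\mid k)\cdot\tfrac{1}{4}\cdot(\text{normalisation})$. Summing over $z$ then yields $\sum_z\max_k P(z,k\mid j)\le 1/4$.

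The main obstacle is the joint distribution of $(\tilde Z,j)$. The announcement $j$ comes from Eve's entangled $\mathcal{J}$-basis measurement, while $\tilde Z$ comes from an LOCC measurement on the same post-unitary system. Without entanglement, $j$ is no longer deterministic, and an adversary could in principle correlate the two. The plan is to treat the whole adversarial procedure as a single instrument: a separable LOCC instrument followed by the $\mathcal{J}$ measurement, or the $\mathcal{J}$ measurement preceded by one. I will then show that its effective POVM, restricted to each $j$-sector, inherits the covariance under $\{V_k\otimes W_k\}$. The key point to verify is that conjugating a separable operator by $V_k\otimes W_k$ keeps it separable and maps the $j$-projector to itself. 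Granting that, the group-averaging bound of the previous step applies sector by sector, which gives $P(\hat s=s\mid\tilde Z,j)\le 1/4=1/|\mathcal{A}|$. If the covariance fails for some ordering, the fallback is to bound the success probability by the separable-measurement discrimination of the four locally-related states, using the fact that any two of them differ by a local unitary on both sides that leaves each marginal's distinguishing power symmetric.
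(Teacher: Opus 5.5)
Your reduction to a four-hypothesis problem at fixed $j$ targets a statement strictly stronger than the claim. That statement is false in exactly the regime this claim covers, namely non-maximally-entangled inputs.

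Take the feasible input $\ket{\psi}_{AB}=\ket{00}$ with no ancilla. From the listed matrices, $U\ket{0}\propto\ket{+_\theta}:=(\ket{0}+e^{i\theta}\ket{1})/\sqrt2$, where:
\begin{itemize}
\item[] $\theta=3\pi/4,\,7\pi/4,\,5\pi/4,\,\pi/4$ for $U_a,U_a',U_a'',U_a'''$;
\item[] $\theta=3\pi/4,\,7\pi/4,\,\pi/4,\,5\pi/4$ for $U_b,U_b',U_b'',U_b'''$.
\end{itemize}
Measure each qubit separately in the basis $\{\ket{+_{3\pi/4}},\ket{+_{7\pi/4}}\}$, and suppose both give $\ket{+_{3\pi/4}}$. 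The four pairs compatible with $\ket{\phi_{00}}$ (the diagonal of Figure~\ref{fig:game_matrix}) then have likelihoods $1,0,\tfrac14,\tfrac14$. The posterior restricted to them puts $2/3$ on $(U_a,U_b)$. Averaged over outcomes, this measurement identifies the $j$-compatible pair with probability $5/8$, not $1/4$. It is a product measurement, so it lies inside your own $A|B|E$ class, and the failure does not depend on how $j$ is generated.

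The specific steps that break are these:
\begin{itemize}
\item[] Local marginals are uninformative only when $\rho_A=\rho_B=I/2$, i.e.\ for the maximally entangled input of Claim~\ref{claim:adversary_bounds_entangled}.
\item[] Once $j$ is fixed, $x_a$ alone \emph{does} determine $x_b$ and $\tilde s$. That is the recoverability property, the whole point of the primitive.
\item[] Your averaging inequality reduces to $\sum_k\ge\max_k$ and carries no content.
\item[] Covariance under a transitive group can only equalize per-hypothesis success probabilities; it cannot by itself produce the number $1/4$.
\end{itemize}
Your fallback (separable discrimination of the four locally related states) fails on the same example.

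The missing idea is that the $1/4$ here is $d_{AB}/M=4/16$, a dimension-counting bound over all sixteen joint unitaries. What Claim~\ref{claim:prob_bound_adversary} asks you to bound is $P(U\mid\tilde Z)$ in that sixteen-hypothesis problem, i.e.\ the probability of both inducing agreement and guessing correctly. It is not the success probability conditioned on $U$ being $j$-compatible. In the example above the former is exactly $1/4$ while the latter is $2/3$.

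There is no need to model $j$ as an honest $\mathcal J$ outcome: it is simply Eve's announcement, and Claim~\ref{claim:prob_bound_adversary} already absorbs it.

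You also use the wrong cut. Eve receives both qubits, so the LOCC restriction is across $AB\,|\,E$ only, and she may measure $AB$ jointly in any basis. A bound for $A|B|E$-separable measurements would not cover her.

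The paper writes each post-unitary state as $(W_U\otimes I_E)\ket{\psi^{(1)}_{ABE}}$ with $W_U$ acting on $AB$. It invokes Nathanson's Lemma~8 to get LOCC success at most $d_{AB}/M$, and uses the prepend-$W_{j,i}$ argument to make per-state success uniform. A self-contained version of the key bound goes as follows. For a product effect $A_z\otimes B_z$ on $AB\otimes E$, the inequality $W_U^\dagger A_zW_U\le\mathrm{Tr}(A_z)\,I$ gives $\bra{\psi}W_U^\dagger A_zW_U\otimes B_z\ket{\psi}\le\mathrm{Tr}(A_z)\,\mathrm{Tr}(\rho_E B_z)$. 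Summing over $z$ with $\sum_z A_z\otimes B_z=I$ then yields $\tfrac1{16}\sum_z\max_U P(z\mid U)\le d_{AB}/16=1/4$.
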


In the described primitive, when $s_a = s_b = \tilde{s} \neq \emptyset$, under the adversarial attacks described Claim \ref{claim:adversary_bounds_entangled} and Claim \ref{claim:adversary_bounds_locc}, $P(\hat{s} = s \mid \tilde{Z}, j) \leq \frac{1}{4} = \frac{1}{|\mathcal{A}|}$. In these settings, Charley and Eve either learn $s$ with probability no better than random, or $s_a \neq s_b$. These bounds require that either $\ket{\psi}_{AB}$ is maximally entangled, \textit{or} Charley and Eve are physically separated so to ensure that each may only measure resulting systems $\rho_E, \rho_{AB}$, respectively and prohibits the malicious parties from quantum communicating.\footnote{Note that trusted state preparation is a requirement of the standard MDI-QKD protocol, and security guarantees require that third-party measurement devices interact only with the signals sent through the parties' quantum channels. Similarly, in superdense coding, if an adversary holds an ancilla entangled with the initial state and can measure it jointly with the post-unitary state, they can learn the encoded message.}. For a discussion on decoupling success rates from error rates see section \ref{appx:decoupling_succ_error}.

\begin{claim}[Classical Impossibility]\label{claim:classical_impossiblity_s}
In any classical protocol with execution $\mathcal{T}_C = (T, R)$, where $s_a = s_b = \tilde{s} \neq \emptyset$, 
\begin{equation}
    P(\hat{s} = s \mid \mathcal{T}_C) > \frac{1}{|\mathcal{A}|}.
\end{equation}
Conversely, by Claim \ref{claim:adversary_bounds_entangled}, the quantum execution $\mathcal{T}_Q = (\rho_{ABE}, \ket{\psi}_{ABE}, j)$ permits honest parties to generate the same value $s_a = s_b = \tilde{s}$ while ensuring $P(\hat{s} = s \mid \tilde{Z}, j) \leq \frac{1}{|\mathcal{A}|}$, when the initial resource $\ket{\psi}_{AB}$ is \textit{maximally-entangled}.
\end{claim}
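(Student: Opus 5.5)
The classical part of Claim~\ref{claim:classical_impossiblity_s} is the substantive one, and I will prove it by contradiction. The quantum converse will follow directly from Claim~\ref{claim:adversary_bounds_entangled} together with the identity in Claim~\ref{claim:prob_bound_adversary}.

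\textbf{Setup.} Fix a classical protocol with execution $\mathcal{T}_C=(T,R)$. Each message is $m_k=f_k(x_k,R,\mathcal{M}_{<k})$, and honest parties record $s_a=s_b=\tilde s\neq\emptyset$. The adversary is given the whole of $\mathcal{T}_C$. Every message and $R$ are deterministic functions of $(X_a,X_b,R)$, with local randomness folded into the inputs or into $R$.

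\textbf{Step 1: agreement forces $\tilde s$ to be a function of $\mathcal{T}_C$.} This is the central step. Alice's record $s_a$ is a function of $(x_a,R,T)$ only, and Bob's $s_b$ is a function of $(x_b,R,T)$ only. Condition on a transcript value $(t,r)$. The standard rectangle property of deterministic classical communication says that the set of input pairs consistent with $(t,r)$ is a product set $A_{t,r}\times B_{t,r}$. On this rectangle we have $s_a(x_a,t,r)=s_b(x_b,t,r)$ for all $x_a\in A_{t,r}$ and $x_b\in B_{t,r}$. Fixing $x_b$ and varying $x_a$ shows that $s_a$ does not depend on $x_a$ within the rectangle, and symmetrically for $s_b$. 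Hence $\tilde s=h(t,r)$ for some function $h$.

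\textbf{Step 2: the adversary computes $\tilde s$.} The adversary who sees $\mathcal{T}_C$ simply outputs $\hat s=(h(T,R),h(T,R))$. This gives $P(\hat s=s\mid\mathcal{T}_C)=1$. Since $|\mathcal{A}|\ge 2$, we get $1>1/|\mathcal{A}|$, which is the claimed strict inequality.

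\textbf{Anticipated obstacle: hidden local randomness.} Parties may use private local randomness $r_i$ that is excluded from $\mathcal{T}_C$. I would absorb $r_i$ into the input $x_i$ and rerun Step 1; the rectangle argument is unchanged. The one thing to check is that agreement must hold with certainty. If agreement held only with some probability, a weaker version would still go through: conditioned on agreement, $\tilde s$ is determined by $\mathcal{T}_C$ up to a set of vanishing measure, and the strict inequality persists whenever the agreement probability is positive. I would state the result for perfect agreement, matching the hypothesis $s_a=s_b$.

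\textbf{Quantum converse.} For the quantum side, I invoke Claim~\ref{claim:adversary_bounds_entangled}. When $\ket{\psi}_{AB}$ is maximally entangled, every $\tilde Z$ extractable from $\rho_{ABE}$, combined with $j$, gives $P(\hat s=s\mid\tilde Z,j)\le 1/4=1/|\mathcal{A}|$. The recording strategy of Figure~\ref{fig:xa+2xb mod 4} yields $s_a=s_b=\tilde s$ for every joint unitary, so this strictly separates the quantum execution from every classical one.
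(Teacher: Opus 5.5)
Your proof is correct and follows the paper's route: the rectangle lemma (with private coins folded into the inputs) plus agreement on the cross pairs forces $\tilde s$ to be a function of $(T,R)$. From there you conclude directly that the adversary succeeds with probability $1$, whereas the paper detours through $I(\tilde s;T,R)>0$ and a non-uniform posterior; the quantum converse is, as in the paper, just a citation of Claim~\ref{claim:adversary_bounds_entangled}.

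Drop your aside about imperfect agreement, though, because it is false. If no messages are sent and each party records an independent uniform bit, then conditioned on $s_a=s_b$ the value $\tilde s$ is uniform and independent of the empty transcript, so the adversary's success probability is exactly $1/|\mathcal{A}|$. The perfect-agreement hypothesis your main argument uses is therefore genuinely needed.
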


The proposed setting therefore permits the parties to generate $\tilde{s}$ that remains information-theoretically hidden, even from adversaries who may condition all on extractable information generated by the protocol-- bypassing the classical limitation that requires a pre-shared classical key which must be stored, reused, and inaccessible to an adversary, in order to securely generate $\tilde{s}$. 

A detailed discussion on various implementations of the proposed primitive, including QKD protocols, and shared function-generation where values depend non-degenerately on both parties' local inputs, as well a discussion on generating private, unbiased coins are relegated to additional publications.

\newpage

\printbibliography

@article{Shi2016,
  title={Secure Multiparty Quantum Computation for Summation and Multiplication},
  author={Shi, Run-hua and others},
  journal={Scientific Reports},
  volume={6},
  pages={19655},
  year={2016},
  doi={10.1038/srep19655}
}

@article{pan2024evolution,
  title={The evolution of quantum secure direct communication: On the road to the qinternet},
  author={Pan, Dong and Long, Gui-Lu and Yin, Liuguo and Sheng, Yu-Bo and Ruan, Dong and Ng, Soon Xin and Lu, Jianhua and Hanzo, Lajos},
  journal={IEEE Communications Surveys \& Tutorials},
  volume={26},
  number={3},
  pages={1898--1949},
  year={2024},
  publisher={IEEE}
}

@article{yang2020bidirectional,
  title={Bidirectional and cyclic quantum dense coding in a high-dimension system: X. Yang et al.},
  author={Yang, Xue and Bai, Ming-qiang and Mo, Zhi-wen and Xiang, Yi},
  journal={Quantum Information Processing},
  volume={19},
  number={2},
  pages={43},
  year={2020},
  publisher={Springer}
}

@article{nguyen2004quantum,
  title={Quantum dialogue},
  author={Nguyen, Ba An},
  journal={Physics Letters A},
  volume={328},
  number={1},
  pages={6--10},
  year={2004},
  publisher={Elsevier}
}

@article{maitra2017measurement,
  title={Measurement device-independent quantum dialogue},
  author={Maitra, Arpita},
  journal={Quantum Information Processing},
  volume={16},
  number={12},
  pages={305},
  year={2017},
  publisher={Springer}
}

@article{Lu2024,
  title={Quantum Secure Multi-Party Summation with Graph State},
  author={Lu, Yan and Ding, Guihua},
  journal={Entropy},
  volume={26},
  number={1},
  pages={80},
  year={2024},
  publisher={MDPI},
  doi={10.3390/e26010080}
}

@article{wu2023quantum,
  title={Quantum secure multi-party summation using single photons},
  author={Wu, Wan-Qing and Xie, Ming-Zhe},
  journal={Entropy},
  volume={25},
  number={4},
  pages={590},
  year={2023},
  publisher={MDPI}
}

@article{yi2021quantum,
  title={Quantum secure multi-party summation protocol based on blind matrix and quantum Fourier transform},
  author={Yi, Xin and Cao, Cong and Fan, Ling and Zhang, Ru},
  journal={Quantum information processing},
  volume={20},
  number={7},
  pages={249},
  year={2021},
  publisher={Springer}
}

@article{hausladen1994pretty,
  title={A ‘pretty good’measurement for distinguishing quantum states},
  author={Hausladen, Paul and Wootters, William K},
  journal={Journal of Modern Optics},
  volume={41},
  number={12},
  pages={2385--2390},
  year={1994},
  publisher={Taylor \& Francis}
}

@article{eldar2004optimal,
  title={Optimal detection of symmetric mixed quantum states},
  author={Eldar, Yonina C and Megretski, Alexandre and Verghese, George C},
  journal={IEEE Transactions on Information Theory},
  volume={50},
  number={6},
  pages={1198--1207},
  year={2004},
  publisher={IEEE}
}

@inproceedings{Yao1982,
  author    = {Andrew C. Yao},
  title     = {Protocols for Secure Computations},
  booktitle = {23rd Annual Symposium on Foundations of Computer Science (FOCS)},
  year      = {1982},
  pages     = {160--164}
}

@inproceedings{GoldreichMicaliWigderson1987,
  author    = {Oded Goldreich and Silvio Micali and Avi Wigderson},
  title     = {How to Play any Mental Game — A Completeness Theorem for Protocols with Honest Majority},
  booktitle = {19th Annual ACM Symposium on Theory of Computing (STOC)},
  year      = {1987},
  pages     = {218--229}
}

@article{CrepeauGottesmanSmith2002,
  author    = {Claude Cr\'epeau and Daniel Gottesman and Adam D. Smith},
  title     = {Secure Multi‑party Quantum Computing},
  journal   = {arXiv e‑prints},
  eprint    = {quant‑ph/0206138},
  year      = {2002},
  url       = {https://arxiv.org/abs/quant‑ph/0206138}
}

@article{bennett1992communication,
  title={Communication via one-and two-particle operators on Einstein-Podolsky-Rosen states},
  author={Bennett, Charles H and Wiesner, Stephen J},
  journal={Physical review letters},
  volume={69},
  number={20},
  pages={2881},
  year={1992},
  publisher={APS}
}

@article{nathanson2005distinguishing,
  title={Distinguishing bipartitite orthogonal states using LOCC: Best and worst cases},
  author={Nathanson, Michael},
  journal={Journal of Mathematical Physics},
  volume={46},
  number={6},
  year={2005},
  publisher={AIP Publishing}
}

\newpage
\section{Appendix}\label{ref:appendix}

\subsection{Proof of Claim \ref{proof:classical_impossibility_j}}
\begin{proof}
First, since $I(X_i; M_i | R) = 0$ $\forall i$, $P(m_i | x_i, r) = P(m_i | r)$, for all realized messages, local inputs, and shared randomness $m_i, x_i, r$. 

Second, note that for all messages $m_a, m_b$ sent by Alice and Bob respectively,
$$P(m_a, m_b | x_a, x_b) = \sum_{r \in \mathcal{R}} P(r) P(m_a| x_a, r) P(m_b | x_b, r)$$
since $R \perp (X_a, X_b)$ and the messages are independent given the shared randomness $M_a \perp M_{b} | (X_a, X_b, R)$. 

Additionally, we may write
$$P(j | x_a, x_b) = \sum_{m_a, m_b} \sum_{r \in \mathcal{R}} P(r) P(j | m_a, m_b)P(m_a| x_a, r) P(m_b | x_b, r)$$
and by the locally information-free condition

$$P(j | x_a, x_b) = \sum_{m_a, m_b} \sum_{r \in \mathcal{R}} P(r) P(j | m_a, m_b)P(m_a| r) P(m_b | r) = P(j)$$

Thus $J \perp (X_a, X_b)$ and the announcement $j$, does not depend on inputs $(x_a, x_b)$, and therefore $H(X_b | x_a, j) >0, H(X_a | x_b, j) > 0$.
\end{proof}

\subsection{Properties of the Recording Strategy}\label{sec:properties:recording-s}

For every joint unitary denote
\[
U = \bigotimes_{i=a}^{b} U_i^{(k_i)} \in \bigotimes_{i=a}^{b} \mathcal{U}_i,
\]

and for any projective measurement described by the set of orthogonal projectors 
\(\{\Pi_j = \ket{j}\!\bra{j}\}_{j \in \mathcal{J}}\)
on the post-unitary state
\(\ket{\phi} = U \ket{\psi}\),
the conditional probability of obtaining outcome \(j\) is
\[
\Pr(j \mid U, \ket{\psi}) = 
\bra{\phi} \Pi_j \ket{\phi} = 
\bra{\psi} U^\dagger \Pi_j U \ket{\psi}.
\]
We define
\[
j^*(U) = \arg\max_{j \in \mathcal{J}} \Pr(j \mid U, \ket{\psi}),
\]
as the most likely measurement outcome in basis \(J\) when measuring the post-unitary state \(\ket{\phi}\) (which may not necessarily be unique).

Define a correspondence
\[
f : \bigotimes_{i=a}^{b} \mathcal{U}_i \longrightarrow J \times \left(\mathcal{A} \cup \{\emptyset\}\right),
\qquad 
f(U) = \big(j^*(U), \tilde{s}\big),
\]
associating each joint unitary with the most likely measured outcome  and the parties’ recorded values.

For each fixed index \(i\), denote 
\[
\mathcal{U}_{-i}=\prod_{k\neq i}\mathcal{U}_k
\]
the Cartesian product of the feasible sets of all parties' unitaries except \(i\). A joint unitary may be written as
\[
U=(U_i,U_{-i})\in\mathcal{U}_i\times\mathcal{U}_{-i}.
\]

While the mapping $f(U)$ may in general be multi-valued, one may select a single feasible function from the correspondence. All such selections require that, the measurement basis $\mathcal{J}$, the initial state $\ket{\psi}$, the feasible local sets 
$\{\mathcal{U}_i\}_{i=a}^{b}$, and the recorded values $\tilde{s}$ satisfy either the first, or both, of the following properties.

\paragraph{Recoverability.}
Let $f^j$ denote the restriction of $f$ to its $j^*$ component:
\[
f^j : \bigotimes_{i=a}^{b} \mathcal{U}_i \to J, \quad 
f^j(U) = j^*(U).
\]

We require that \(f^j\) be injective in each argument when the other arguments are held fixed. Equivalently, for every \(i\in\{a, b\}\) the following condition holds:
\[
\forall\,U_{-i}\in\mathcal{U}_{-i}:\qquad
U_i\mapsto f^j(U_i,U_{-i})\ \text{is injective on }\mathcal{U}_i,
\]
and
\[
\forall\,U_{i}\in\mathcal{U}_{i}:\qquad
U_{-i}\mapsto f^j(U_i,U_{-i})\ \text{is injective on }\mathcal{U}_{-i},
\]

 In other words, for any fixed choices of the other parties, distinct local unitaries of party \(i\) produce distinct most-likely outcomes, and for any fixed own-unitary, distinct unitaries of other parties produce distinct most-likely outcomes. Therefore, given their own unitary \(U_i\) and the announced outcome \(j^*=f^j(U_i,U_{-i})\), party \(i\) can uniquely recover the tuple \(U_{-i}\) (and hence each other party's unitary), when $j^*$ is measured under $(U_i, U_{-i})$ in $\mathcal{J}$ with probability one.

\paragraph{Indistinguishibility.}
  \begin{enumerate}
      \item For every $U$ with $f(U) = (j^*, \tilde{s})$ with $\tilde{s} \neq \emptyset$, the mapping $f$ is injective: 
      
      \[
          f(U) = f(U') \implies U = U'.
      \]
    Each $(j^*, \tilde{s})$ with a non-null $\tilde{s}$ value, corresponds to a unique joint unitary.

      \item  
      For every $j^*$ for which there is a $U$ where $f(U) = (j^*, \tilde{s})$ and $\tilde{s} \neq \emptyset$, 
         there exists a set of joint unitaries 
    \[
    \mathcal{U}_{j^*} \subseteq \bigotimes_{i=a}^{b} \mathcal{U}_i
    \]
    such that $f^j(U_j) = j^*$ and $Pr(j^* | U_j, \ket{\psi}) \approx Pr(j^* | U, \ket{\psi})$ for all $U_j \in \mathcal{U}_{j^*}$ and $|\mathcal{U}_{j^*}| \geq |\mathcal{A}|$.

    Hence, multiple distinct joint unitaries correspond to the same most-likely measurement outcome \( j^* \) with approximately equal likelihood, ensuring that knowledge of \( j^* \) alone does not allow any observer to infer the recorded value 
\( \tilde{s} \) with probability greater than $\frac{1}{q}$.

  \end{enumerate}

When the initial state, basis, and unitaries satisfy the recoverability property only, the parties may deduce each other's local inputs from the measurement outcomes, although the recording strategy may not be such that the recorded values of $\tilde{s}$ remain fully hidden from all observers or attackers. Any such primitive facilitates the decentralized exchange of local private information.
In these cases, the particular security bounds on joint unitaries and recorded values of $\tilde{s}$ will depend on the number of unitaries in the feasible set and the recording strategy chosen.

An example of such an $f$ is depicted in Figure \ref{fig: recoverability_only}, for two parties $i \in \{a, b\}$, for $j^* \in \{j_1, j_2, j_3, j_4\}$, and a binary function value $\tilde{s} \in \{s_1, s_2\}$. 
\begin{figure}[ht]
    \centering
    \[
    \begin{array}{c|cccc}
       & U_b & U_b' & U_b'' & U_b''' \\ \hline
    U_a & (j_1, s_1) & (j_2, s_1) & (j_3, s_1) & (j_4, s_1)\\
    U_a' & (j_2, s_1) & (j_1, s_1) & (j_4, s_1) & (j_3, s_2) \\
    U_a'' & (j_3, s_1) & (j_4, s_1) & (j_1, s_2) & (j_2, s_2) \\
    U_a''' & (j_4, s_1) & (j_3, s_2) & (j_2, s_2) & (j_1, s_2) \\
    \end{array}
    \]
    \caption{Entries given by values of $f(U) = (j^*, \tilde{s})$ for joint unitaries $U$, and corresponding recording strategies $\tilde{s}$ given unitaries $U_i$ and measurement announcements $j^*$.}
    \label{fig: recoverability_only}
\end{figure}

Instead, an example of a recording strategy that that satisfies \textit{both} properties for a valid $f$ is depicted in Figure \ref{fig: arb_s_strategy}, for two parties $i \in \{a, b\}$, for $j^* \in \{j_1, j_2, j_3, j_4\}$, and $\tilde{s} \in \{s_1, s_2, s_3, \emptyset\}$. We note that for any given $\mathcal{U}, \mathcal{J}, \mathcal{A}$, there may be many valid recording strategies that satisfy the prescribed properties.

\begin{figure}[ht]
    \centering
    \[
    \begin{array}{c|cccc}
       & U_b & U_b' & U_b'' & U_b''' \\ \hline
    U_a & (j_1, s_1) & (j_2, s_2) & (j_3, s_3) & (j_4, \emptyset)\\
    U_a' & (j_2, s_1) & (j_1, s_2) & (j_4, s_3) & (j_3, \emptyset) \\
    U_a'' & (j_3, s_2) & (j_4, s_1) & (j_1, \emptyset) & (j_2, s_3) \\
    U_a''' & (j_4, s_2) & (j_3, s_1) & (j_2, \emptyset) & (j_1, s_3) \\
    \end{array}
    \]
    \caption{Entries given by values of $f(U) = (j^*, \tilde{s})$ for joint unitaries $U$, and corresponding recording strategies $\tilde{s}$ given unitaries $U_i$ and measurement announcements $j^*$.}
    \label{fig: arb_s_strategy}
\end{figure}

\section{Proof of Claim \ref{claim:prob_bound_adversary}}
\begin{proof}
    If $s_a = s_b = \tilde{s}$, then $j = j^*(U)$ under the true joint unitary $U$. If the adversary announces $j \neq j^*(U)$, parties will record a values of $\tilde{s}_i$ that are either non-matching or null. This follows from the recoverability property, and indistinguishiblity property one -- each $(j^*, \tilde{s})$ pair for any $\tilde{s} \neq \emptyset$ is unique.

    Next, consider any $\tilde{U} \neq U$. If $j^*(\tilde{U}) = j^*(U)$, then $\hat{s}_a \neq s_a, \hat{s}_b \neq s_b$, and $P(\hat{s} = s | \tilde{U}, j) = 0$. This follows from indistinguishibility property two.

    If instead, $j^*(\tilde{U}) \neq j^*(U)$, then $\hat{s}_a \neq \hat{s_b}$ and $P(\hat{s} = s | \tilde{U}, j) = 0$ by the recoverability property, and indistinguishiblity property one -- each $(j^*, \tilde{s})$ pair for any $\tilde{s} \neq \emptyset$ is unique.

    Therefore, $P(\hat{s} = s | \tilde{Z}, j) = \sum_{i=k}^{|\mathcal{U}_{a, b}|} P(\hat{s} = s | U^{(k)}, j) P(U^{(k)} | \tilde{Z}) = P(\hat{s} = s | U, j)P(U |\tilde{Z})$. 
    Since $P(\hat{s} = s | U, j) = 1$ under $(U, j)$, $P(\hat{s} = s | \tilde{Z}, j) = P(U | \tilde{Z})$.
    
\end{proof}

\section{Proof of Claim \ref{claim:adversary_bounds_entangled}}
We prove the claim by showing that any arbitrary quantum operation reduces to a state discrimination problem on the $\rho_{AB}$ systems, bounded by the properties of the Pretty Good Measurement (PGM).


\begin{lemma}
    Any POVM performed jointly on $\rho_{ABE}$ provides no more information than a measurement on $\rho_{AB}$ alone when the global state factorizes as $\rho_{AB} \otimes \rho_E$.
\end{lemma}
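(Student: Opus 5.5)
We reduce any joint POVM on $\rho_{ABE}$ to a POVM on $\rho_{AB}$ by tracing out the ancilla against its fixed state. The key structural observation comes first. Since the parties' unitaries act as $U_a \otimes U_b \otimes I_E$, if the initial state factorizes as $\ket{\psi}_{ABE} = \ket{\psi}_{AB} \otimes \ket{\psi}_E$ (which is forced when $\ket{\psi}_{AB}$ is pure, e.g.\ maximally entangled, by monogamy of entanglement), then the post-unitary state is
\[
\rho_{ABE}(U) = \rho_{AB}(U) \otimes \rho_E .
\]
Here $\rho_E = \ket{\psi}\!\bra{\psi}_E$ does not depend on the joint unitary $U$. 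The adversary therefore faces the ensemble $\{P(U),\, \rho_{AB}(U)\otimes\rho_E\}$, in which the $E$ factor is a fixed, known state carrying no label information.

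Next, take an arbitrary POVM $\{M_z\}$ on $\mathcal{H}_a\otimes\mathcal{H}_b\otimes\mathcal{H}_E$, possibly chosen with knowledge of $\ket{\psi}_{ABE}$. Define the operators on $AB$
\[
N_z = \mathrm{Tr}_E\!\left[(I_{AB}\otimes \rho_E)\, M_z\right].
\]
We then verify three properties:
\begin{itemize}
\item $N_z \ge 0$, because $M_z \ge 0$ and $\rho_E \ge 0$ (write $\rho_E = \sum_k \lambda_k \ket{e_k}\!\bra{e_k}$ and note each $\bra{e_k}M_z\ket{e_k} \ge 0$);
\item $\sum_z N_z = \mathrm{Tr}_E[I_{AB}\otimes\rho_E] = I_{AB}$;
\item $\mathrm{Tr}[(\rho_{AB}(U)\otimes\rho_E)M_z] = \mathrm{Tr}[\rho_{AB}(U) N_z]$ for every $U$.
\end{itemize}
Hence $\{N_z\}$ is a valid POVM on $AB$ alone. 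It reproduces the joint outcome distribution $P(z\mid U)$ exactly, so the posterior $P(U\mid z)$, and thus $P(U\mid\tilde Z)$, is identical. Any adaptive procedure or classical side information conditioned on $\ket{\psi}_{ABE}$ is a classical post-processing of $z$, so it cannot increase the information beyond what $\{N_z\}$ provides. If the adversary uses a general quantum operation rather than a measurement, we compose it with the final measurement to get a single POVM, and the same argument applies.

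The main obstacle is justifying the factorization hypothesis rather than the algebra. The lemma as stated assumes it, but to use it in Claim~\ref{claim:adversary_bounds_entangled} one must argue that if the reduced state on $AB$ is the pure maximally entangled $\ket{\psi}_{AB}$, then any purification $\ket{\psi}_{ABE}$ is a product with $E$. The route is the Schmidt decomposition across the $AB|E$ cut: $\rho_{AB}$ has rank one, so there is a single Schmidt term. With the lemma in hand, the remaining bound reduces, via Claim~\ref{claim:prob_bound_adversary}, to discriminating $\{\rho_{AB}(U)\}$ on $AB$ only, which is then handled by the PGM argument.
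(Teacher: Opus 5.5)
Your proof is correct and follows the paper's argument essentially step for step. You factorize the post-unitary state as $\rho_{AB}(U)\otimes\rho_E$ (via purity of the reduced $AB$ state), define $N_z=\mathrm{Tr}_E[(I_{AB}\otimes\rho_E)M_z]$, and verify positivity, completeness, and reproduction of the outcome statistics. Your positivity argument through the spectral decomposition of $\rho_E$, and your explicit remark that $\rho_E$ does not depend on $U$, are slightly more careful than the paper's appeal to ``partial trace preserves positivity'', since the operator $M_z(I_{AB}\otimes\rho_E)$ is not itself positive.
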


\begin{proof}
    Suppose that Eve can measure the post-unitary state jointly with any initial ancilla system, 
\[
\ket{\phi}
    = \big(  \bigotimes_{i=a}^{b} U_i \otimes I_E \big)\ket{\psi_{ABE}}.
\]

If the initial state shared between Alice and Bob is maximally entangled,
then the joint state must factorize as
\[
\ket{\psi}_{ABE} = \ket{\psi}_{AB} \otimes \ket{\psi}_E,
\]
since a maximally entangled bipartite state cannot be correlated with a third
system. Therefore the resulting density matrix after the unitaries are applied
is of the form
\[
\rho_{ABE} = \rho_{AB} \otimes \rho_E.
\]

Consider any POVM $\{M_j\}$ acting on the joint system $ABE$. The probability of
measurement outcome $j$ is
\[
\mathrm{Tr}_{ABE}[M_j (\rho_{AB} \otimes \rho_E)].
\]
Using properties of the partial trace, this can be written as
\[
\mathrm{Tr}_{ABE}[M_j (\rho_{AB} \otimes \rho_E)]
=
\mathrm{Tr}_{AB}\!\left[
    \mathrm{Tr}_E[M_j (I_{AB} \otimes \rho_E)]\, \rho_{AB}
\right].
\]

Define
\[
N_j = \mathrm{Tr}_E[M_j (I_{AB} \otimes \rho_E)].
\]
Then
\[
\mathrm{Tr}_{ABE}[M_j (\rho_{AB} \otimes \rho_E)]
=
\mathrm{Tr}_{AB}[N_j \rho_{AB}].
\]

It remains to show that $\{N_j\}$ forms a valid POVM on the $AB$ system.
Positivity follows because the partial trace preserves positivity.
Furthermore,
\[
\sum_j N_j
=
\mathrm{Tr}_E\!\left[\left(\sum_j M_j\right)(I_{AB} \otimes \rho_E)\right]
=
\mathrm{Tr}_E[I_{ABE} (I_{AB} \otimes \rho_E)]
=
I_{AB} \mathrm{Tr}(\rho_E)
=
I_{AB}.
\]
Thus $\{N_j\}$ is a valid POVM on the $AB$ subsystem. Therefore, any measurement performed jointly on $ABE$ provides no more information than a measurement on $AB$ alone when the global state factorizes as $\rho_{AB} \otimes \rho_E$.
\end{proof}

The PGM protocol (\cite{hausladen1994pretty}) solves for POVMs $\{\Pi_j\}$ given density matrices $\{\rho^j\}$ each occurring with prior probability $p_j$ such that:

$$\Gamma = \sum_j p_j \rho^j$$
$$\pi_j = p_j \Gamma^{-1/2} \rho^j \Gamma^{-1/2} $$

$$P_{succ}^{PGM,j} = Tr[\rho^j \Pi_j]$$
is the probability of detecting $\rho^j$ under PGM, when it is the true state.

By Corollary 2 in \cite{eldar2004optimal}, PGM is optimal if the probability of correctly detecting the state is identical for any state $\rho^j$. 
In other words if 
$$P_{succ}^{PGM,j} = Tr[\rho^j \Pi_j] = P_{succ}^{PGM,k} = Tr[\rho^k \Pi_k], \forall j, k$$ then PGM achieves the maximal state distinguishibility.

Moreover, the following condition holds: 

\begin{lemma}
    $\frac{1}{M} \sum_j Tr[\pi_j \rho^j] \leq \frac{d}{M}$, for $M$ states $\rho^j$, each in $d$ dimensions.
\end{lemma}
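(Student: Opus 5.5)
The bound follows from two facts: the optimal success probability for discriminating any ensemble is at most the sum of the operator norms of the weighted states, and the POVM elements sum to the identity. Let $\{\rho^j\}_{j=1}^M$ be density matrices on a $d$-dimensional space $\mathcal{H}$, and let $\{\pi_j\}$ be the PGM elements built from them. The only properties of $\{\pi_j\}$ I intend to use are $\pi_j \succeq 0$ and $\sum_j \pi_j \preceq I_d$. The second holds because $\sum_j \pi_j = \Gamma^{-1/2}\Gamma\,\Gamma^{-1/2}$ is the projector onto the support of $\Gamma$. The argument therefore bounds any POVM, not only the PGM.

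First, I bound each term of the average by the operator norm of the state. For each $j$, since $\pi_j \succeq 0$ and $\rho^j \preceq \|\rho^j\|_\infty I$, we have
\[
\mathrm{Tr}[\pi_j \rho^j] = \mathrm{Tr}\big[\pi_j^{1/2}\rho^j\pi_j^{1/2}\big] \le \|\rho^j\|_\infty \mathrm{Tr}[\pi_j] \le \mathrm{Tr}[\pi_j].
\]
Here $\|\rho^j\|_\infty \le 1$ because $\rho^j$ is a density matrix.

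Second, I sum over $j$ and use $\sum_j \pi_j \preceq I_d$:
\[
\sum_j \mathrm{Tr}[\pi_j\rho^j] \le \mathrm{Tr}\Big[\sum_j \pi_j\Big] \le \mathrm{Tr}[I_d] = d.
\]
Dividing by $M$ gives $\frac{1}{M}\sum_j \mathrm{Tr}[\pi_j\rho^j] \le \frac{d}{M}$, which is the claim. The statement says ``in $d$ dimensions''; I read $d$ as the dimension of the common Hilbert space the states act on, and I will state that reading explicitly.

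There is no real technical obstacle. The one point that needs care is the case where $\Gamma$ is not full rank. There $\Gamma^{-1/2}$ must be read as the pseudo-inverse on $\mathrm{supp}(\Gamma)$, so that $\sum_j \pi_j$ is the projector onto the support rather than $I$. The inequality $\sum_j\pi_j \preceq I$ still holds in that case, so the bound is unaffected.

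For the later application to Claim \ref{claim:adversary_bounds_entangled}, I note that $d=4$ for the two-qubit $AB$ system and $M = 16$ joint unitaries. The bound then gives average success probability at most $\frac{4}{16} = \frac{1}{4} = \frac{1}{|\mathcal{A}|}$. By the symmetry and Corollary 2 of \cite{eldar2004optimal}, this average equals the per-state success probability.
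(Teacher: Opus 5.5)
Your proof is correct and follows the same route as the paper: you bound each term by $\mathrm{Tr}[\pi_j\rho^j]\le\lambda_{\max}(\rho^j)\,\mathrm{Tr}[\pi_j]\le\mathrm{Tr}[\pi_j]$ and then use that the traces of the PGM elements sum to at most $d$. Your version is slightly more general than the paper's, since it does not need the $\rho^j$ to be pure and it replaces the paper's equality $\sum_j\mathrm{Tr}[\pi_j]=d$ with $\sum_j\pi_j\preceq I$, which also covers a rank-deficient $\Gamma$.
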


\begin{proof}
Since $\rho^j$ is a positive semi-definite pure-state density matrix, its eigenvalues are $\{1,0\}$ and therefore
\[
\lambda_{\max}(\rho^j)=1.
\]
The eigenvalues of $I-\rho^j$ are then given by $1-\lambda_k(\rho^j)$, where $\lambda_k(\rho^j)$ denote the eigenvalues of $\rho^j$. Since $0 \le \lambda_k(\rho^j) \le 1$, we have
\[
1-\lambda_k(\rho^j) \ge 0.
\]
Hence all eigenvalues of $I-\rho^j$ are non-negative, implying
\[
I-\rho^j \succeq 0,
\]
and in particular
\[
\lambda_{\max}(I-\rho^j) \ge 0.
\]

Since $\pi_j$ is positive semi-definite
$$Tr[\pi_j \lambda_{max} (I - \rho^j)] \geq 0$$

Re-arranging, 

$$ Tr[\pi_j \rho^j] \leq \lambda_{max} Tr[\pi_j] $$
$$\sum_j Tr[\pi_j \rho^j] \leq \sum_j Tr[\pi_j] = d$$
and therefore, $$\frac{1}{M} \sum_j Tr[\pi_j \rho^j] \leq \frac{d}{M} $$

\end{proof}

Therefore, for $M$ quantum states in $d$ dimensions, if $P_{succ}^j = P_{succ}^k, \forall j, k$  $P_{succ}^j \leq \frac{d}{M}, \forall j$. 

The resulting states in the described primitive satisfy the above conditions. Additionally, their ensemble is maximally-mixed, $\frac{1}{M} \sum_{i=1}^{M} \rho^i = \frac{1}{d} I$.

\begin{lemma}
    For a quantum state ensemble $\{ \rho^i \}_{i=1}^{M}$ whose average is maximally mixed, i.e., $\frac{1}{M} \sum_{i=1}^{M} \rho^i = \frac{1}{d} I$, any trace-non-preserving (TNP) operation $E$ followed by a post-selected guess cannot identify the true state $\rho^i$ with a posterior probability higher than the success probability achieved by the Pretty Good Measurement (PGM).
\end{lemma}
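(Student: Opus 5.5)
The plan is to write an arbitrary trace-non-preserving operation followed by a post-selected guess as a single sub-normalised measurement. Under the maximally-mixed ensemble average, the acceptance probability becomes a fixed multiple of $\mathrm{Tr}[F]$. The same is true of the correct-guess probability for pure states. The ratio is then $d/M$, which is what PGM achieves.

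\emph{Step 1: reduce to one family of effects.} Let $E(\rho)=\sum_k K_k\rho K_k^\dagger$ with $\sum_k K_k^\dagger K_k = F \preceq I$. The acceptance event, in which the operation succeeds and is not discarded, has probability $\mathrm{Tr}[F\rho]$. Conditioned on acceptance, the adversary applies some guessing POVM $\{G_i\}_{i=1}^M$ to the normalised output. Pull the guess back through the Kraus operators by setting $A_i=\sum_k K_k^\dagger G_i K_k$. Then $A_i\succeq 0$, $\sum_i A_i = F$, and the joint probability of accepting and guessing $i$ on input $\rho^j$ is $\mathrm{Tr}[A_i\rho^j]$. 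Post-selection over several branches can be absorbed the same way, so this form is fully general.

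\emph{Step 2: compute the posterior success probability.} Take uniform priors $1/M$, as in the primitive. The probability of a correct guess given acceptance is
\[
P_{\mathrm{post}}=\frac{\frac{1}{M}\sum_i \mathrm{Tr}[A_i\rho^i]}{\frac{1}{M}\sum_j \mathrm{Tr}[F\rho^j]}.
\]
By the hypothesis $\frac1M\sum_j\rho^j=\frac{I}{d}$, the denominator equals $\mathrm{Tr}[F]/d$. For the numerator, use the same argument as in the preceding lemma: each $\rho^i$ in the primitive is pure, so $\rho^i\preceq I$ and hence $\mathrm{Tr}[A_i\rho^i]\le \mathrm{Tr}[A_i]$. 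Summing over $i$ gives at most $\sum_i\mathrm{Tr}[A_i]=\mathrm{Tr}[F]$. Therefore $P_{\mathrm{post}}\le \frac{\mathrm{Tr}[F]/M}{\mathrm{Tr}[F]/d}=\frac{d}{M}$, whenever acceptance has nonzero probability ($\mathrm{Tr}[F]>0$).

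\emph{Step 3: compare with PGM.} With $\Gamma=I/d$, the PGM elements are $\pi_j=\frac1M\Gamma^{-1/2}\rho^j\Gamma^{-1/2}=\frac{d}{M}\rho^j$. For pure $\rho^j$ this gives $\mathrm{Tr}[\pi_j\rho^j]=\frac dM$ for every $j$. So PGM attains exactly $d/M$, uniformly across states, and by Corollary 2 of \cite{eldar2004optimal} it is optimal. Combining with Step 2, post-selection cannot raise the posterior above the PGM value.

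The main obstacle is Step 1: arguing convincingly that every adaptive or post-selected strategy (multiple rounds, discarding, classical processing) collapses to a single family $\{A_i\}$ with $\sum_i A_i\preceq I$. A second concern is that the bound genuinely uses purity of the $\rho^i$ through $\lambda_{\max}(\rho^i)=1$. For mixed ensembles, the inequality in Step 2 still yields $d/M$, but PGM may fall below it. The statement should therefore be read, as in the primitive, for pure post-unitary states.
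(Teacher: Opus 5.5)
Your argument is correct and is essentially the paper's: the maximally mixed average turns the acceptance probability into $\mathrm{Tr}(S)/d$, the numerator is bounded via $\rho^i\preceq\lambda_{\max}(\rho^i)I$, and the PGM value is $d/M$ for pure states. Your explicit computation $\pi_j=\tfrac{d}{M}\rho^j$ supplies a step the paper only asserts. The one difference is that the paper bounds the posterior $P(\rho^i\mid\mathcal{Z})=\mathrm{Tr}(S\rho^i)/(\tfrac{M}{d}\mathrm{Tr}(S))$ of each hypothesis for an arbitrary success effect $0\preceq S\preceq I$, rather than averaging over a pulled-back guessing POVM. Taking $S=A_i$ gives your bound outcome by outcome, and it makes your Step 1 worry about adaptive strategies unnecessary, since any final post-selected event has some such effect.
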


\begin{proof}
A TNP operation is defined by a Kraus operator $E$ such that $E^\dagger E = S \leq I$. Let the event $\mathcal{Z}$ denote the "success" of this operation. The probability of this event occurring, given the true state is $\rho^i$, is:
\begin{equation}
    P(\mathcal{Z} \mid \rho^i) = \mathrm{Tr}(E \rho_i E^\dagger) = \mathrm{Tr}(S \rho_i).
\end{equation}    

The probability that the true state is $\rho^i$, conditioned on $\mathcal{Z}$, is:
\begin{equation}
    P(\rho^i \mid \mathcal{Z}) = \frac{P(\mathcal{Z} \mid \rho^i) P(\rho^i)}{P(\mathcal{Z})} = \frac{\frac{1}{M} \mathrm{Tr}(S \rho_i)}{\frac{1}{M} \sum_{j=1}^{M} \mathrm{Tr}(S \rho_j)}.
\end{equation}

By the linearity of the trace and the given condition $\frac{1}{M} \sum_j \rho^j = \frac{1}{d} I$, the denominator (the total probability of the outcome $\mathcal{Z}$) simplifies:
\begin{equation}
    P(\mathcal{Z}) = \mathrm{Tr}\left( S \left[ \frac{1}{M} \sum_{j=1}^{M} \rho^j \right] \right) = \mathrm{Tr}\left( S \frac{1}{d} I \right) = \frac{1}{d} \mathrm{Tr}(S).
\end{equation}
Substituting this back into the posterior probability:
\begin{equation}
    P(i \mid \mathcal{Z}) = \frac{\mathrm{Tr}(S \rho^i)}{\mathrm{Tr}(S) \cdot \frac{M}{d}}.
\end{equation}

Since $S, \rho_i$ are positive semi-definite operators,
 Then:
\begin{equation}
    \mathrm{Tr}(S \rho^i) \leq \lambda_{max}(\rho^i) \mathrm{Tr}(S).
\end{equation}
where $\lambda_{max}(\rho^i)$ is the maximum eigenvalue of $\rho_i$.
Thus, for any $S$, the posterior probability is bounded by:
\begin{equation}
    P(i \mid \mathcal{Z}) \leq \frac{\lambda_{max}(\rho^i) \mathrm{Tr}(S)}{\mathrm{Tr}(S) \cdot \frac{M}{d}} = \frac{d}{M} \lambda_{max}(\rho^i).
\end{equation}
$$P_{succ}^{PGM,i} = Tr[\rho^i \Pi_i] = \frac{d}{M} \lambda_{max}(\rho^i) = \frac{d}{M}$$

Therefore, $P(\rho^i \mid \mathcal{Z}) \leq P_{succ}^{PGM,i}$.

\end{proof}
Therefore, given any quantum operation—global or local, trace-preserving or probabilistic—the success probability in identifying the true state $\rho^i$ (and therefore the joint unitary $U$) in any iteration is bounded by $\frac{1}{4}$.

\section{Proof of Claim \ref{claim:adversary_bounds_locc}}

Let the true joint state be
\(\rho_{ABE}^j \in \mathcal{L}(\otimes_{i=a}^{b} \mathcal{H}_i \otimes\mathcal{H}_E)\), where  the superscript 
$j$ indexes the possible ensemble member, while the subscripts denote the corresponding subsystems. Write \(d_{AB} \coloneqq \dim(\otimes_{i=a}^{b} \mathcal{H}_i\)) and \(d_E \coloneqq \dim(\mathcal{H}_E)\).

Recall that prior to the application of local unitaries, Charley prepares an ancilla system 
\(\mathcal{H}_E\) of arbitrary dimension \(d_E\).  
The resulting joint initial state is
\[
\ket{\psi_{ABE}^{\rm init}} \in \otimes_{i=a}^{b} \mathcal{H}_i \otimes \mathcal{H}_E
\]

The parties then apply their respective local unitaries \(U_{AB}(j) = U_A(j)\otimes U_B(j)\) on the AB system, yielding
\[
\ket{\psi_{ABE}^{j}} = (U_A(j) \otimes U_B(j) \otimes I_E)\ket{\psi_{ABE}^{\rm init}}.
\]

Lemma 8 of \cite{nathanson2005distinguishing} states that for \(M\) equally probable states \(\{\ket{\psi_{ABE}^{j}}\}\) in dimension \(d_{AB} \times d_E\) such that 
\(\ket{\psi_{ABE}^{j}} = (I_E \otimes \tilde{U}(j)) \ket{\psi_{ABE}^{1}}\), for some $ \tilde{U}$, for all $j$ the maximum success probability using LOCC is bounded by
\[
P_{\rm succ}^{\rm LOCC} = \sum_j p_j P_{\rm succ}^{\rm LOCC, j}\le \frac{d_{AB}}{M}.
\]

In other words, when Eve can unilaterly transform any $\ket{\psi_{ABE}^i}$ into any $\ket{\psi_{ABE}^j}$ for any $i, j$ the parties' joint distinguishibility is bounded by $\frac{d_{AB}}{M}$.

\begin{lemma}
    In the described primitive, $P_{\rm succ}^{\rm LOCC, j} \leq \frac{d_{AB}}{M}$
\end{lemma}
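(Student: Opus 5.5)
The plan is to reduce the lemma to Lemma 8 of \cite{nathanson2005distinguishing} by checking its hypotheses for the ensemble produced by the primitive. Then we pass from the averaged LOCC bound to the per-member bound $P_{\rm succ}^{\rm LOCC, j}$ using the symmetry of the ensemble.

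\textbf{Step 1: check the group-orbit hypothesis.}
Index the $M=16$ joint unitaries of Section \ref{example-protocol-joint} by $j$, with uniform prior. For every pair $j,k$ we have
\[
\ket{\psi_{ABE}^{k}} = \big(\tilde U(k,j)\otimes I_E\big)\ket{\psi_{ABE}^{j}},
\qquad
\tilde U(k,j) = \big(U_A(k)U_A(j)^\dagger\big)\otimes\big(U_B(k)U_B(j)^\dagger\big).
\]
Each $\tilde U(k,j)$ is a local unitary on $AB$ and acts trivially on $E$. This holds for an arbitrary $\ket{\psi_{ABE}^{\rm init}}$, so no entanglement assumption on $\ket{\psi}_{AB}$ is needed, as Claim \ref{claim:adversary_bounds_locc} requires. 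The whole ensemble is therefore the orbit of one fixed state under unitaries on the $AB$ factor. Lemma 8 of \cite{nathanson2005distinguishing} applies to the cut $AB\,|\,E$ and gives $\sum_j \tfrac1M P_{\rm succ}^{\rm LOCC,j}\le d_{AB}/M = 4/16 = 1/4$.

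\textbf{Step 2: pass from the average to each member.}
The main obstacle is that the cited bound controls only the average success probability, while the lemma is stated per member $j$. I would symmetrize (twirl) any LOCC strategy. Given an LOCC protocol with POVM $\{N_k\}$, build the covariant protocol that first samples a uniformly random $g$ from the index set and applies $\tilde U(g)$ locally on $AB$. This is a local operation, so the protocol stays LOCC. It then runs the original protocol and relabels the guess accordingly.

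For this to work, the set $\{U_A(j)\otimes U_B(j)\}$ must act transitively and (up to phase) as a group on the ensemble. I would verify this from Figure \ref{fig:game_matrix}: the products $U_a^{(x)}U_a^{(y)\dagger}$ should map Alice's set to itself up to global phase, and likewise for Bob. If the set is only a Latin-square structure rather than a group, I would instead use the $16$ correlated-Pauli-like translations visible in the table.

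\textbf{Step 3: conclude.}
Under the twirled protocol, every member has success probability equal to the original protocol's average, and that average is at most $d_{AB}/M$ by Step 1. The optimum of $\max_j P_{\rm succ}^{\rm LOCC,j}$ is therefore achieved, without loss of generality, by a covariant strategy, which gives $P_{\rm succ}^{\rm LOCC,j}\le d_{AB}/M = 1/4$ for each $j$.

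A remaining subtlety concerns an adversary who optimizes for a specific $j$ rather than for the average. There I would appeal to Claim \ref{claim:prob_bound_adversary}: the relevant quantity is the posterior $P(U\mid\tilde Z)$. Its expectation over the uniform prior equals the average success probability, and the twirling shows that no labeling of outcomes can favour one member without penalizing another.
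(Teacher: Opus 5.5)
Your Step 1 is the paper's own first step and is sound. Every member is $(\tilde U\otimes I_E)$ applied to one fixed state, so $\rho_E$ does not depend on the member, and the average LOCC success across the $AB\,|\,E$ cut is at most $d_{AB}/M$.

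The group hypothesis you leave open in Step 2 also holds. Each local unitary is a Pauli operator times $U_a$, up to phase: $U_b=U_a$, $U_b'=U_a'=iZU_a$, $U_a''=iXU_a$, $U_a'''=U_b''=iYU_a$, and $U_b'''=-iXU_a$. So the twirl is well defined.

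\textbf{The gap is Step 3.} Twirling replaces a protocol by a covariant one whose per-member success equals the original \emph{average}. It preserves the average but can only lower $\max_j P^{\rm LOCC,j}_{\rm succ}$. It therefore shows that covariant strategies are without loss of generality for the average (or worst-case) figure of merit, not for the best-case per-member one. Your assertion that the optimum of $\max_j P^{\rm LOCC,j}_{\rm succ}$ is attained by a covariant strategy is false.

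In fact no argument can give a per-member bound over arbitrary LOCC strategies. Measuring nothing and always outputting $j_0$ is LOCC and gives $P^{\rm LOCC,j_0}_{\rm succ}=1$. Even an average-optimal strategy has success $1$ on four members: take the honest $\ket{\psi}_{AB}$ with no ancilla, measure $AB$ in $\mathcal{J}$, and guess a fixed member of the detected class. Your closing paragraph does not close this either. A bound on the \emph{expected} posterior says nothing about individual outcomes, whereas ``every iteration'' security is a per-outcome statement.

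The paper's proof has exactly your structure: Nathanson's Lemma~8 for the average, then ``prepend $W_{j,i}$'' to argue that all members share one success probability. Your twirl is a careful version of that step, so following the paper does not rescue it. Prepending $W_{j,i}$ yields a \emph{different} protocol that does as well on $j$ as the old one did on $i$, not one protocol doing well on every member.

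What is true, and what Claim~\ref{claim:adversary_bounds_locc} needs via Claim~\ref{claim:prob_bound_adversary}, is a per-outcome bound on the posterior $P(U\mid\tilde Z)$. Here is how to get it:
\begin{enumerate}
\item Since each local set is a unitary $1$-design, the uniform average of the ensemble is $\bar\rho=\frac{I_{AB}}{d_{AB}}\otimes\rho_E$.
\item Any LOCC outcome has a separable effect $S=\sum_k F_k\otimes G_k$, with $F_k\succeq 0$ on $AB$ and $G_k\succeq 0$ on $E$.
\item Because $\rho_E$ is member-independent, $\bra{\psi^{j}}F_k\otimes G_k\ket{\psi^{j}}\le\mathrm{Tr}[F_k]\,\mathrm{Tr}[G_k\rho_E]$.
\item Hence $P(j\mid S)=\mathrm{Tr}[S\rho^{j}]/(M\,\mathrm{Tr}[S\bar\rho])\le d_{AB}/M$ for every outcome, and averaging recovers your Step~1.
\end{enumerate}
This is the LOCC counterpart of the paper's trace-non-preserving lemma used for Claim~\ref{claim:adversary_bounds_entangled}.
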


\begin{proof}
For some fixed $U_A(1) \otimes U_B(1)$ in the feasible set of joint unitaries, let
$$\ket{\psi_{ABE}^{(1)}} = (I_E \otimes U_A(1) \otimes U_B(1)) \ket{\psi_{ABE}^{\rm init}}$$

Each
\[
W_j := U_{AB}(j) U_{AB}(1)^\dagger = \big(U_A(j) U_A(1)^\dagger\big) \otimes \big(U_B(j) U_B(1)^\dagger\big)
\]
is a unitary on AB 
where \(\ket{\psi_{ABE}^{(j)}} = (I_E \otimes W_j) \ket{\psi_{ABE}^{(1)}}\)

Finally, we note that if there exists a LOCC protocol achieving $P_{\rm succ}^{\rm i, LOCC} > \frac{4}{M}$ for some state $i$, 
then for any other state $j$, Eve can prepend a single local unitary rotation $W_{j,i}$ on the AB system 
that maps $|\psi^{(j)}\rangle \mapsto |\psi^{(i)}\rangle$ before executing the same LOCC protocol for $i$, where the resulting success probability is identical to that for $i$. 
Consequently, the maximal success probability is the same for all states, and by Lemma~8 of~\cite{nathanson2005distinguishing}, we have
\[
P_{\rm succ}^{j, \rm LOCC} = P_{\rm succ}^{\rm LOCC} \le \frac{d_{AB}}{M}.
\]    
\end{proof}

\section{Proof of Claim \ref{claim:classical_impossiblity_s}}
\begin{lemma}[The Rectangle Lemma]
For any classical communication protocol $\pi$, let $x_A \in \mathcal{X}$ and $x_B \in \mathcal{Y}$ be the inputs, $r_A \in \mathcal{R}_A$ and $r_B \in \mathcal{R}_B$ be any private random coins, and $R \in \mathcal{R}$ be public shared coins. For any fixed transcript $T$ and any fixed values of the random strings $r_A, r_B$, and $r$, the set of input pairs $(x_A, x_B)$ that results in the transcript $T$ is a combinatorial rectangle:
\[
\{(x_A, x_B) \in \mathcal{X} \times \mathcal{Y} \mid \pi(x_A, r_A, x_B, r_B, r) = T\} = S_{A, z, r_A, r} \times S_{B, z, r_B, r}
\]
where $S_{A, T, r_A, r} \subseteq \mathcal{X}$ depends only on the transcript, Alice's private coins, and the public coins, and $S_{B, T, r_B, r} \subseteq \mathcal{Y}$ depends only on the transcript, Bob's private coins, and the public coins.
\end{lemma}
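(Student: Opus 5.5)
The plan is the standard induction on the rounds of the protocol, carried out with the random strings $r_A, r_B, r$ held fixed throughout. Once these are fixed, the protocol $\pi$ becomes a deterministic protocol on inputs $(x_A, x_B)$. Each message is then a deterministic function of three things: the sender's input, that sender's private coins together with the public coins, and the previously sent messages. This matches the form $m_k = f_k(x_k, R, \mathcal{M}_{<k})$ in the Definition of the Protocol Execution, extended by the private coins $r_A, r_B$. So it suffices to prove the rectangle property for deterministic protocols whose next-message functions are indexed by $(r_A, r)$ on Alice's side and $(r_B, r)$ on Bob's side.

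Write the fixed transcript as $T = (m_1, \dots, m_l)$ and define, for $0 \le k \le l$,
\[
R_k = \{(x_A, x_B) \in \mathcal{X} \times \mathcal{Y} \mid \text{the first } k \text{ messages of } \pi(x_A, r_A, x_B, r_B, r) \text{ equal } (m_1, \dots, m_k)\}.
\]
I will show by induction that $R_k = S^{(k)}_A \times S^{(k)}_B$. Here $S^{(k)}_A$ may depend only on $(m_1,\dots,m_k)$, $r_A$ and $r$, and $S^{(k)}_B$ only on $(m_1,\dots,m_k)$, $r_B$ and $r$.

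\textbf{Base case.} $R_0 = \mathcal{X} \times \mathcal{Y}$.

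\textbf{Inductive step.} Suppose message $k+1$ is sent by Alice. Then
\[
R_{k+1} = R_k \cap \{(x_A,x_B) : f_{k+1}(x_A, r_A, r, m_1,\dots,m_k) = m_{k+1}\}.
\]
The second set has the form $A' \times \mathcal{Y}$, where $A' \subseteq \mathcal{X}$ depends only on $(m_{\le k+1}, r_A, r)$. The key point is that on $R_k$ the true prefix of messages coincides with the fixed prefix $m_{\le k}$. We may therefore evaluate $f_{k+1}$ on the fixed prefix rather than on the input-dependent one, which is what makes the constraint depend on $x_A$ alone.

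It follows that $R_{k+1} = (S^{(k)}_A \cap A') \times S^{(k)}_B$, so the dependence structure is preserved. The case where Bob speaks is symmetric. Taking $k = l$ gives the claimed $S_{A,T,r_A,r} \times S_{B,T,r_B,r}$; the subscript $z$ in the displayed formula should read $T$.

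I expect no real obstacle; the only delicate step is the one just described, replacing the input-dependent prior messages $\mathcal{M}_{<k}$ by the fixed prefix inside $R_k$. One also has to handle protocols whose message order or length depends on earlier messages. Since the speaker of round $k+1$ is itself a function of $m_{\le k}$, it is fixed once we condition on $R_k$. If $T$ is not a complete transcript the set is empty, and the empty set $\emptyset \times \emptyset$ is trivially a rectangle.
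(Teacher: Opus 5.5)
The paper does not prove this lemma; it states it as the standard rectangle property from communication complexity and uses it only inside the proof of the next lemma, to obtain $(x_A,r_A)\perp(x_B,r_B)\mid T,R$. Your round-by-round induction is the standard textbook argument, and it is correct. The one substantive step is right: on $R_k$ the realized prefix equals the fixed prefix $m_{\le k}$, so the round-$(k+1)$ constraint is $\{x_A : f_{k+1}(x_A,r_A,r,m_{\le k})=m_{k+1}\}\times\mathcal{Y}$. Your handling of transcript-determined turn order, early termination, and a non-leaf $T$ (all of which just give the empty rectangle) is also correct, and you rightly read the subscript $z$ as $T$.

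Relative to the paper, your argument adds an explicit check that $S_A$ depends only on $(T,r_A,r)$, which the paper assumes silently. For the paper's downstream use, run the same induction with $(x_A,r_A)$ and $(x_B,r_B)$ as the inputs and only $r$ fixed. The conditional independence then additionally needs a product prior on those pairs, which the paper's assumptions supply.
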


\begin{lemma}
    If a classical protocol (deterministic or random) that produces a (classical) transcript $T$, if $s_a = s_b = \tilde{s} \neq \emptyset$, for any (classical) variable $R$ that is accessible to both parties $I(S; T, R) > 0$.
\end{lemma}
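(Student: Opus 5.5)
The plan is to show something stronger than $I(S;T,R)>0$: whenever the two parties' recorded values agree, $S$ is a \emph{deterministic function} of the public execution $(T,R)$. Then $H(S\mid T,R)=0$, and $I(S;T,R)=H(S)>0$ as soon as $S$ is non-degenerate. The tool is the Rectangle Lemma above. It converts the structure of a classical protocol into conditional independence of the two parties' private views given $(T,R)$.

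\textbf{Step 1 (conditional independence of views).} Fix a realization $(T,R)=(t,r)$ of positive probability. By the Rectangle Lemma, for fixed private coins the set of input pairs consistent with $t$ is a rectangle. Applying the same reasoning to the augmented private views $V_a=(x_a,r_a)$ and $V_b=(x_b,r_b)$, the event $\{\pi(V_a,V_b,r)=t\}$ equals $\{V_a\in S_{A,t,r}\}\cap\{V_b\in S_{B,t,r}\}$. The prior on $(V_a,V_b)$ is a product, since $X_a\perp X_b$, the private coins are local, and $R$ is independent of everything. Conditioning a product measure on a rectangle preserves the product form, so
\[
P(V_a,V_b\mid t,r)=P(V_a\mid t,r)\,P(V_b\mid t,r),
\]
that is, $V_a\perp V_b \mid (T,R)$.

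\textbf{Step 2 (agreement forces determinism).} Each party's record is a function of its own view and the public data: $s_a=g_a(V_a,t,r)$ and $s_b=g_b(V_b,t,r)$. The hypothesis $s_a=s_b=\tilde s\neq\emptyset$ means $g_a(V_a,t,r)=g_b(V_b,t,r)$ with probability one, conditional on $(t,r)$. Two conditionally independent random variables that are almost surely equal must both be almost surely constant. Indeed, for any value $\sigma$,
\[
P(s_a=\sigma\mid t,r)=P(s_a=\sigma,\,s_b=\sigma\mid t,r)=P(s_a=\sigma\mid t,r)\,P(s_b=\sigma\mid t,r),
\]
so each probability lies in $\{0,1\}$. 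Hence there is a function $h$ with $\tilde s=h(t,r)$, and therefore $H(S\mid T,R)=0$.

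\textbf{Step 3 (conclusion).} This gives $I(S;T,R)=H(S)-H(S\mid T,R)=H(S)$. The remaining task, which I expect to be the main obstacle, is justifying $H(S)>0$. The statement as written does not say this explicitly. I will argue it from the setting: $s$ depends non-degenerately on the uniformly distributed inputs (it takes at least two values in $\mathcal{A}$ with positive probability), as required for the function to be meaningful. I would state this as an explicit standing assumption. Once it holds, $I(S;T,R)>0$.

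The same argument yields Claim~\ref{claim:classical_impossiblity_s}: an adversary holding $\mathcal{T}_C=(T,R)$ can output $\hat s=h(T,R)$ and succeed with probability $1>1/|\mathcal{A}|$. A secondary technical point is making Step~1 rigorous for randomized, interactive protocols in which message $m_k$ depends on $\mathcal{M}_{<k}$. I would handle it by induction on rounds: each message restricts only the sender's view given the prior transcript and $R$, so the consistent set remains a product set at every round.
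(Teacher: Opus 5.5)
Your proposal is correct and follows essentially the paper's route: the Rectangle Lemma forces $\tilde s$ to be a function of $(T,R)$, so $H(\tilde s\mid T,R)=0$ and $I(\tilde s;T,R)=H(\tilde s)$. The only differences are minor. The paper closes the middle step by a cross-pair substitution on the rectangle (Alice's view from one consistent execution paired with Bob's from another), where you use conditional independence and $p_\sigma=p_\sigma^2$, which is equivalent here. Your explicit standing assumption $H(\tilde s)>0$ is warranted, since the paper simply asserts it and the lemma fails without it.
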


\begin{proof}
    In any classical communication protocol by the rectangle lemma, $x_A \perp x_B \mid T, R$, in deterministic protocols, and $(x_A, r_A) \perp (x_B, r_B) \mid T, R$, for any random variables $r_i$ held by part $i$, for any variables $R$ accessible to both parties.

    First consider the latter case, and let $(x_A, R_a, x_B, R_b)$ and $(x_A', R_a', x_B', R_b')$ be two arbitrary pairs that produce the same $(T, R)$. 
    Since Alice computes $s$ from $(x_A, R_a, T, R)$, it follows from the rectangle lemma
    $$s_a(x_A, R_a, x_B, R_b, T, R) = s_a(x_A, R_a, x_B', R_b', T, R)$$
    Since Bob computes $s$ from $(x_B', R_b', T, R)$, it follows from the rectangle lemma
    $$s_b(x_A, R_a, x_B', R_b', T, R) = s_b(x_A', R_a', x_B', R_b', T, R)$$
    Since $s_a(x_A, R_a, x_B', R_b', T, R) = s_b(x_A, R_a, x_B', R_b', T, R)$, then \\
    $s(x_A, x_B, R_a, R_b, T, R) = s(x_a, x_b', R_a, R_b', T, R) = s(x_A', x_B', R_a', R_b', T, R) = \tilde{s}$, and $H(\tilde{s} |T, R) = 0$. 
    $I(\tilde{s}; T, R) = H(\tilde{s}) - H(\tilde{s} | T, R) = H(\tilde{s}) > 0 $.
    
    The analogous argument holds for deterministic protocols.
\end{proof}
For any protocol that produces execution $\mathcal{T}_c$, the adversary's optimal estimate of the parties' recorded values $\hat{s} = (\hat{s}_a, \hat{s}_b)$, is the Maximum Likelihood estimator:
\begin{equation}
\hat{s}(\mathcal{T}_c) \coloneqq \arg\max_{(s_j, s_k) \in \mathcal{A}^2} P(s_a = s_j, s_b = s_k \mid \mathcal{T}_c).
\end{equation}

Since $s_a = s_b$
\begin{equation}
P(\hat{s} = {s} \mid \mathcal{T}_c) = \max_{s^* \in \mathcal{A}} P(s_a = s^*, s_b = s^* \mid \mathcal{T}_c).
\end{equation}

Given the previous lemma, when $s_a = s_b = \tilde{s}$, $I(\tilde{s}, \mathcal{T}_c) >0$, the posterior $P(\tilde{s} | \tilde{T}_c) \neq \frac{1}{\mathcal{A}}$, and therefore,

\begin{equation}
P(\hat{s} = {s} \mid \mathcal{T}_c) > \frac{1}{|\mathcal{A}|}.
\end{equation}

Note that in the quantum setting when $\ket{\psi}_{AB}$ is the maximally-entangled state prescribed by the primitive, the rectangle property fails. $X_a \not\perp X_b | \mathcal{T}_Q$. In particular, inputs under pairs $(x_a, x_b), (x_a', x_b')$ may yield the same $\mathcal{T}_Q$, but the corresponding cross pairs $(x_a', x_b), (x_a, x_b')$ yield \textit{different} post-unitary density matrices $\rho_{AB}$, and therefore different $\mathcal{T}_Q$. It is this property that permits the quantum setting to bypass the classical impossibility of security over $s$. 

\section{Decoupling Success Rate from Error}\label{appx:decoupling_succ_error}
By Claim \ref{claim:adversary_bounds_entangled}, and Claim \ref{claim:adversary_bounds_locc}, under either attack scenario, $P(U | \tilde{Z}) \leq P_{succ}^{PGM, j}$, for the true state $\rho^j$ corresponding to the joint unitary $U$, where PGM is performed on the $\rho_{AB}$ system, and the initial state $\ket{\psi}_{AB}$ may be \textit{any} permissible initial state (which need not be entangled). 

In the described primitive, for any true state $\rho^j$ and corresponding PGM operator $\pi_j$, $P(\rho^j | \pi_j) = P(\rho^k | \pi_j)$, for all states $\rho^k$ with corresponding unitaries $U^{(k)}$ such that $j^*(U) = j^*(U^{(k)})$. In other words, PGM measurements induce a uniform posterior over all states consistent with the correct measurement announcement $j^*(U)$. Therefore, when the adversary chooses an attack strategy consisting of performing PGM on post-unitary $\rho_{AB}$ systems, and announcing $j = j^*(\tilde{U})$, for all $\tilde{U}$ corresponding with states $\rho^i$ such that $P(\rho^i | \pi_j) > 0$, then  $P(U | \tilde{Z}) \leq \frac{1}{|\mathcal{A}|}$, while $P(s_a = s_b) = 1$.


\section{Higher-Dimensional Implementations}
We prove that one instantiation of the primitive in which the parties' unitaries are given by those prescribed below in Theorem 1, and the initial state and corresponding basis are given by either: the standard generalized Bell basis generated from \( |\Phi_d\rangle = \frac{1}{\sqrt d}\sum_{r=0}^{d-1} |r,r\rangle \) via \( |\Phi_{a,b}\rangle = (I\otimes X^b Z^a)|\Phi_d\rangle \), or the phase-twisted variant generated from \( |\Psi_d\rangle = \frac{1}{\sqrt d}\sum_{r=0}^{d-1} c_r |r,r\rangle \) with \( |c_r|=1 \), via \( |\Psi_{a,b}\rangle = (I\otimes X^b Z^a)|\Psi_d\rangle \) satisfy the security properties of the preceding claims.
\begin{lemma}[Index-shift invariance of Weyl-label ensembles]
Let $d \in \mathbb{N}$ and consider the label set
\[
\mathbb{Z}_d^4
=
\{(j_n,j_m,j_n',j_m')\}.
\]

For each
\[
j=(j_n,j_m,j_n',j_m') \in \mathbb{Z}_d^4,
\]
define the shift map induced by
\[
k=(k_n,k_m,k_n',k_m') \in \mathbb{Z}_d^4
\]
as
\[
T_k : \mathbb{Z}_d^4 \to \mathbb{Z}_d^4,
\qquad
T_k(j)
=
(j_n+k_n,\; j_m+k_m,\; j_n'+k_n',\; j_m'+k_m')
\pmod d.
\]

Then $T_k$ is a bijection (permutation) of $\mathbb{Z}_d^4$.

Consequently, for any function
\[
f : \mathbb{Z}_d^4 \to \mathbb{C},
\]
we have
\[
\sum_{j \in \mathbb{Z}_d^4} f(T_k(j))
=
\sum_{j \in \mathbb{Z}_d^4} f(j).
\]
\end{lemma}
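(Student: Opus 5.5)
The plan is to exhibit an explicit inverse for $T_k$ and then reindex the sum. First, I note that $T_k$ is well defined as a map $\mathbb{Z}_d^4\to\mathbb{Z}_d^4$, since each coordinate is reduced mod $d$. Then I define $T_{-k}$ with $-k=(-k_n,-k_m,-k_n',-k_m') \pmod d$. Because addition in $\mathbb{Z}_d$ is associative and $k_\ell+(-k_\ell)\equiv 0 \pmod d$ in each coordinate $\ell\in\{n,m,n',m'\}$, we get $T_{-k}\circ T_k=T_k\circ T_{-k}=\mathrm{id}$. Equivalently, $T_k$ is translation by $k$ in the finite abelian group $(\mathbb{Z}_d)^4$, and translations in a group are bijections. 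This establishes that $T_k$ is a permutation of $\mathbb{Z}_d^4$.

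For the summation identity, I would substitute $j'=T_k(j)$. Since $T_k$ is a bijection of the finite index set, the map $j\mapsto j'$ runs over every element of $\mathbb{Z}_d^4$ exactly once. Hence
\[
\sum_{j\in\mathbb{Z}_d^4} f(T_k(j))=\sum_{j'\in T_k(\mathbb{Z}_d^4)} f(j')=\sum_{j'\in\mathbb{Z}_d^4} f(j').
\]
Here I only use that a finite sum of complex numbers is invariant under permutation of its terms, which follows from commutativity and associativity of addition in $\mathbb{C}$.

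No step is a genuine obstacle. The only point that needs care is making the reduction mod $d$ explicit, so that the inverse $T_{-k}$ is seen to land back in $\{0,\dots,d-1\}^4$. Injectivity then also follows directly: if $T_k(j)=T_k(j')$, then $j_\ell+k_\ell\equiv j'_\ell+k_\ell$ in each coordinate, and cancelling $k_\ell$ in $\mathbb{Z}_d$ gives $j=j'$. Surjectivity follows by finiteness or from the explicit inverse.
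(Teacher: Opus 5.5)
Your proof is correct: exhibiting translation by $-k$ as a two-sided inverse of $T_k$ in the group $\mathbb{Z}_d^4$, then reindexing the finite sum through the bijection $j\mapsto T_k(j)$, is the standard argument. The paper states this lemma without proof and simply invokes it in the proof of the PGM covariance theorem (``the map $j\mapsto k\oplus j$ is a permutation of the label set''), so your argument supplies exactly the reasoning the paper takes for granted.
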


\begin{theorem}[PGM covariance and uniform success probability]
Let
\[
U_{a,b}=X^b Z^a,
\qquad
a,b\in\mathbb Z_d,
\]
where
\[
X|r\rangle = |r+1 \!\!\!\pmod d\rangle,
\qquad
Z|r\rangle = \omega^r |r\rangle,
\qquad
\omega=e^{2\pi i/d}.
\]

For each label
\[
j=(j_n,j_m,j_n',j_m')\in\mathbb Z_d^4,
\]
define
\[
U_j
:=
U_{j_n,j_m}\otimes U_{j_n',j_m'}.
\]

Let
\[
|\psi_j\rangle := U_j|\psi\rangle,
\]
where
\[
|\psi\rangle\in\mathbb C^d\otimes\mathbb C^d
\]
is an arbitrary fixed bipartite state.

Assume the ensemble $\mathcal E = \left\{
\frac1{d^4},\,|\psi_j\rangle
\right\}_{j\in\mathbb Z_d^4}$ has a uniform prior distribution.

Let
\[
\rho
=
\frac1{d^4}
\sum_{j\in\mathbb Z_d^4}
|\psi_j\rangle\langle\psi_j|
\]
denote the average state and let
\[
\Pi_j
=
\rho^{-1/2}
\frac1{d^4}
|\psi_j\rangle\langle\psi_j|
\rho^{-1/2}
\]
denote the corresponding pretty-good measurement (PGM).

Then



The PGM success probability is independent of the true state label:
\[
\langle\psi_j|\Pi_j|\psi_j\rangle
=
\langle\psi_0|\Pi_0|\psi_0\rangle
\qquad
\forall\,j\in\mathbb Z_d^4.
\]

\end{theorem}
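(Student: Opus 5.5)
The proof will use group covariance. The unitaries $U_j$ form a projective representation of the additive group $\mathbb Z_d^4$, and the average state $\rho$ will be shown to commute with every $U_k$. The PGM success probability is then invariant under relabeling. The argument has four steps.

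\textbf{Step 1 (Weyl commutation).} From $X|r\rangle=|r+1\rangle$ and $Z|r\rangle=\omega^r|r\rangle$ we have $ZX=\omega XZ$. Hence
\[
U_{a,b}U_{a',b'}=X^bZ^aX^{b'}Z^{a'}=\omega^{ab'}X^{b+b'}Z^{a+a'}=\omega^{ab'}U_{a+a',b+b'}.
\]
Tensoring the two factors gives, for $j,k\in\mathbb Z_d^4$,
\[
U_kU_j=\theta(k,j)\,U_{T_k(j)},\qquad |\theta(k,j)|=1,
\]
with the indices taken mod $d$. This is valid because $X^d=Z^d=I$, so exponents reduce mod $d$ consistently. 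Consequently
\[
U_k|\psi_j\rangle\langle\psi_j|U_k^\dagger=|\psi_{T_k(j)}\rangle\langle\psi_{T_k(j)}|,
\]
since the phase cancels against its conjugate.

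\textbf{Step 2 (invariance of $\rho$).} Summing Step 1 over $j$ and applying the Index-shift invariance lemma, which says $T_k$ is a bijection of $\mathbb Z_d^4$, gives $U_k\rho U_k^\dagger=\rho$ for every $k$. Thus $\rho$ commutes with every $U_k$. In fact $\rho$ is a twirl over the full local Weyl group, so $\rho$ equals $\mathrm{Tr}(\rho)\,I/d^2=I/d^2$. The argument does not need this, only the commutation.

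The one real subtlety is that $\rho$ need not be invertible in general. So $\rho^{-1/2}$ is interpreted as the pseudo-inverse on the support of $\rho$, and we show it is also $U_k$-invariant. Since $\rho$ commutes with the unitary $U_k$, its spectral projections commute with $U_k$, and hence so does any function of $\rho$ defined through functional calculus. This includes the pseudo-inverse square root $g(\rho)$ with $g(\lambda)=\lambda^{-1/2}$ for $\lambda>0$ and $g(0)=0$. For the Weyl twirl $\rho=I/d^2$ is invertible anyway.

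\textbf{Step 3 (covariance of the PGM).} From Steps 1 and 2,
\[
U_k\Pi_jU_k^\dagger=\rho^{-1/2}\,\tfrac{1}{d^4}\,U_k|\psi_j\rangle\langle\psi_j|U_k^\dagger\,\rho^{-1/2}=\Pi_{T_k(j)}.
\]

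\textbf{Step 4 (uniform success probability).} Take $k=j$ and the base label $0$. Then $|\psi_j\rangle=U_j|\psi_0\rangle$, since $U_0=I$, and $T_j(0)=j$. Therefore
\[
\langle\psi_j|\Pi_j|\psi_j\rangle=\langle\psi_0|U_j^\dagger\,\Pi_{T_j(0)}\,U_j|\psi_0\rangle=\langle\psi_0|U_j^\dagger U_j\Pi_0U_j^\dagger U_j|\psi_0\rangle=\langle\psi_0|\Pi_0|\psi_0\rangle,
\]
which is the claim.

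The only delicate point is the functional-calculus argument for $\rho^{-1/2}$ in Step 2. The rest is routine tracking of Weyl phases, which cancel because every state appears as a projector.
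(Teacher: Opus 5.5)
Your proposal is correct and follows essentially the same route as the paper. Both arguments use the Weyl commutation relation to get $U_k|\psi_j\rangle\langle\psi_j|U_k^\dagger=|\psi_{k\oplus j}\rangle\langle\psi_{k\oplus j}|$, then the index-shift lemma to get $U_k\rho U_k^\dagger=\rho$, then commutation of $\rho^{-1/2}$ with $U_k$, and finally $\Pi_j=U_j\Pi_0U_j^\dagger$. Your extra remarks are correct refinements that the paper leaves implicit: that $\rho=I/d^2$ here, and that the pseudo-inverse subtlety is therefore moot.
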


\begin{proof}
Fix
\[
k=(k_n,k_m,k_n',k_m')
\in\mathbb Z_d^4.
\]

Using
\[
Z^aX^b=\omega^{ab}X^bZ^a,
\]
we obtain
\[
U_{k_n,k_m}U_{j_n,j_m}
=
(X^{k_m}Z^{k_n})
(X^{j_m}Z^{j_n})
=
\omega^{k_nj_m}
U_{k_n+j_n,\;k_m+j_m},
\]
where all additions are modulo $d$.

Similarly,
\[
U_{k_n',k_m'}U_{j_n',j_m'}
=
\omega^{k_n'j_m'}
U_{k_n'+j_n',\;k_m'+j_m'}.
\]

Therefore
\[
U_kU_j
=
\omega^{k_nj_m+k_n'j_m'}
U_{k\oplus j},
\]
where
\[
k\oplus j
=
(k_n+j_n,\,
k_m+j_m,\,
k_n'+j_n',\,
k_m'+j_m')
\pmod d.
\]

Hence
\[
U_k|\psi_j\rangle
=
U_kU_j|\psi\rangle
=
\omega^{k_nj_m+k_n'j_m'}
|\psi_{k\oplus j}\rangle.
\]

Since the phase cancels in the corresponding projector,
\[
U_k
|\psi_j\rangle\langle\psi_j|
U_k^\dagger
=
|\psi_{k\oplus j}\rangle
\langle\psi_{k\oplus j}|.
\]

Consequently,
\[
U_k\rho U_k^\dagger
=
\frac1{d^4}
\sum_{j\in\mathbb Z_d^4}
|\psi_{k\oplus j}\rangle
\langle\psi_{k\oplus j}|.
\]

By the previous lemma, applied componentwise to $\mathbb Z_d^4$, the map $j\mapsto k\oplus j$ is a permutation of the label set. Therefore
\[
\sum_{j\in\mathbb Z_d^4}
|\psi_{k\oplus j}\rangle
\langle\psi_{k\oplus j}|
=
\sum_{j\in\mathbb Z_d^4}
|\psi_j\rangle
\langle\psi_j|,
\]
and hence
\[
U_k\rho U_k^\dagger
=
\rho.
\]

Therefore
\[
[\rho,U_k]=0
\qquad
\forall\,k\in\mathbb Z_d^4.
\]

Since $\rho$ commutes with every $U_k$, every spectral function of $\rho$
also commutes with every $U_k$. In particular,
\[
[\rho^{-1/2},U_k]=0.
\]

Using
\[
|\psi_j\rangle
=
U_j|\psi_0\rangle,
\]
we obtain
\[
\begin{aligned}
\Pi_j
&=
\rho^{-1/2}
\frac1{d^4}
U_j
|\psi_0\rangle
\langle\psi_0|
U_j^\dagger
\rho^{-1/2}
\\
&=
U_j
\left(
\rho^{-1/2}
\frac1{d^4}
|\psi_0\rangle
\langle\psi_0|
\rho^{-1/2}
\right)
U_j^\dagger.
\end{aligned}
\]

Defining
\[
\Pi_0
=
\rho^{-1/2}
\frac1{d^4}
|\psi_0\rangle
\langle\psi_0|
\rho^{-1/2},
\]
it follows that
\[
\Pi_j
=
U_j\Pi_0U_j^\dagger.
\]

Finally,
\[
\begin{aligned}
\langle\psi_j|\Pi_j|\psi_j\rangle
&=
\langle\psi_0|
U_j^\dagger
(U_j\Pi_0U_j^\dagger)
U_j
|\psi_0\rangle
\\
&=
\langle\psi_0|
\Pi_0
|\psi_0\rangle.
\end{aligned}
\]

Therefore
\[
\langle\psi_j|\Pi_j|\psi_j\rangle
=
\langle\psi_0|\Pi_0|\psi_0\rangle
\]
for all
\[
j\in\mathbb Z_d^4.
\]

Thus the PGM success probability is identical for every state labeled by $j$.
\end{proof}
By the above, and by Proof of \ref{claim:adversary_bounds_entangled}, under the unitary operators prescribed above, since there are $d^4$ states in $d^2$ dimensions, $P_{succ}^{PGM, j} \leq \frac{d^2}{d^4} = \frac{1}{d^2}$, and hence any recording strategies for functions with $|\mathcal{A}| \leq \frac{1}{d^2}$, abiding by the prescribed properties, will satisfy the security constraints given by Claim \ref{claim:adversary_bounds_entangled} and Claim \ref{claim:adversary_bounds_locc}.

\end{document}